\newcommand{\tabincell}[2]{\begin{tabular}{@{}#1@{}}#2\end{tabular}}
\begin{document}

\mainmatter  

\pagestyle{empty}

\title{Specialty-Aware Task Assignment in Spatial Crowdsourcing}

\author{Tianshu Song, Feng Zhu, Ke Xu}

\institute{SKLSDE Lab, School of Computer Science and Engineering \\Beihang University, Beijing 100191, China}

\maketitle

\begin{abstract}
	With the rapid development of Mobile Internet, spatial crowdsourcing is gaining more and more attention from both academia and industry. 
	In spatial crowdsourcing, spatial tasks are sent to workers based on their locations.
	A wide kind of tasks in spatial crowdsourcing are specialty-aware, which are complex and need to be completed by workers with different skills collaboratively. 
	Existing studies on specialty-aware spatial crowdsourcing assume that each worker has a united charge when performing different tasks, no matter how many skills of her/him are used to complete the task, which is not fair and practical.
	In this paper, we study the problem of specialty-aware task assignment in spatial crowdsourcing, where each worker has fine-grained charge for each of their skills, and the goal is to maximize the total number of completed tasks based on tasks' budget and requirements on particular skills.
	The problem is proven to be NP-hard. Thus, we propose two efficient heuristics to solve the problem.
	Experiments on both synthetic and real datasets demonstrate the effectiveness and efficiency of our solutions.
\end{abstract}


\section{Introduction}
With the development mobile Internet and the blossom of sharing economy, all kinds of \textit{\textit{s}patial \textit{c}rowdsourcing (SC)} platforms become popular, where the \textit{online crowd workers} are employed by their phones to participate in and complete \textit{offline crowdsourcing tasks} in the physical world \cite{CHI13}. Typical SC platforms includes Gigwalk\footnote{www.gigwalk.com}, TaskRabbit\footnote{www.taskrabbit.com} and gMission\footnote{gmission.github.io} \cite{gMission}.

One fundamental issue in SC is task assignment, namely assigning crowdsourcing tasks to suitable crowd workers. Generally speaking, there are two kinds of tasks. The first kind is micro tasks which can be completed by any single worker such as taking photos and delivering things. The second kind is specialty-aware tasks such as repairing a house and organizing a party, where crowd workers with different kinds of skills are needed to work collaboratively and finish the task. For micro-task assignment, there are many existing works and we refer the readers to \cite{DBLP:journals/pvldb/TongCS17} for more details. In this paper we focus on specialty-aware tasks assignment.

Existing works \cite{TKDE16,WAIM16} on specialty-aware tasks assignment formulate that each crowd worker has multiple skills and will get a united fee if s/he is employed, which is not very practical as (1) workers often have unbalanced workloads, (2) workers may be confused what they should do in a task and (3) the payment and the workload do not often match. To solve the above drawbacks, in this paper we propose the \underline{S}pecialty-\underline{A}ware \underline{T}ask \underline{A}ssignment (SATA) problem where each crowd worker specify a fee for each of her/his skill to make the payment proportional to the workload.

We then illustrate the STAT problem by a motivation example of organizing a party.

\begin{table}[t]
	\caption{Tasks and their lists of skills}
	\label{intro-ex1}
	\centering
	\begin{tabular}{|c|c|}
		\hline
		Tasks     & Lists of required skills \\
		\hline
		$t_1$     & $s_1$(music), $s_2$(drinks)                   \\
		\hline
		$t_2$     & $s_1$(music), $s_3$(barbecue), $s_4$(lights)         \\
		\hline
		$t_3$  	  & \tabincell{c}{$s_1$(music), $s_2$(drinks), $s_3$(barbecue),\\ $s_4$(lights), $s_5$(stage)} \\
		\hline	
	\end{tabular}
\end{table}


\begin{table}[t]
	\caption{Workers' skills and fees}
	\label{intro-ex2}
	\centering
	\begin{tabular}{|c|c|}
		\hline
		Workers  & Skills and fees \\
		\hline
		$w_1$     & $(s_1, 3), (s_2, 4), (s_4, 5)$                   \\
		\hline
		$w_2$     & $(s_3, 5), (s_5, 3)$                   \\
		\hline
		$w_3$     & $(s_4, 2)$                   \\
		\hline
		$w_4$     & $(s_1, 5), (s_5, 1)$         \\
		\hline
		$w_5$  	  & $(s_1, 2), (s_2, 2), (s_3, 3), (s_4, 6)$ \\
		\hline	
	\end{tabular}
\end{table}


\begin{figure}[htbp]
	\centering
	\includegraphics[width=0.5\textwidth]{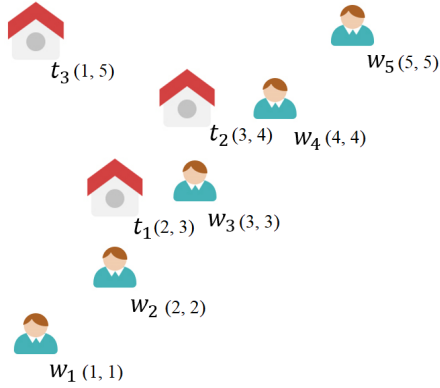}
	\label{fig:intro-ex}
	\caption{Locations of tasks and workers.}
	\label{fig:result1}
\end{figure}

\begin{example} \label{ex:fteam}
	Suppose we have three tasks of throwing parties, each has different styles and thus different kinds of works need to be done. For example, party 1 is a mini one and only needs music and drinks, while party 3 is ceremonious and requires music, drinks, barbecue, lights and a stage. The skill lists of the three tasks are shown in Table~\ref{intro-ex1}. Besides, we have some workers  shown in Table~\ref{intro-ex2}, each with different skills and corresponding fees. For example, if $w_1$ is required to finish the music job ($a_1$), s/he will be paid 3. Besides, each worker will get the transportation fee, which equals the distance from the worker to the assignment task times a global unit price. For example, Figure~1 shows the locations of tasks and workers, and if the global unit price is 0.5, the transportation fees for assigning $w_1$ to $t_1$ is $\sqrt{5}\times0.5\approx1.12$.
\end{example}

Motivated by the example above, we will formalize the STAT problem, which aims to efficiently assign crowd workers to specialty-aware tasks to maximize the total utility of the assignment. Note that existing works either focus on assigning workers to micro tasks to optimize different goals, or assume that the 
workers have a united fee. Thus, their methods cannot be directed adopted to solve our problem.

In this paper, we first prove the NP-hardness of the SATA problem, indicating that SATA is not tractable and it is challenging to gain the optimal solution. Therefore, we propose two efficient and effective heuristics to solve it. 

To summarize, we make the following contributions.

\begin{itemize}
	\item We formally define a new task assignment problem in spatial crowdsourcing, called the \underline{S}pecialty-\underline{A}ware \underline{T}ask \underline{A}ssignment (SATA) problem. 
	\item We prove the SATA problem is NP-hard, and develop two efficient heuristics to solve it.
	\item We verify the effectiveness and efficiency of the proposed methods with extensive experiments on real and synthetic datasets.
\end{itemize}

The rest of the paper is organized as follows. We define the SATA problem and prove its NP-hardness in Section 2. Section 3 discusses extensive experiment results on both synthetic and real datasets. We review related works in Section 4 and conclude in Section 5.

\section{Problem Definition}

We first introduce two basic concepts, namely Task and Worker. Then, we introduce how to calculate the reward of worker. Finally, we formally give the definition of the \underline{S}pecialty-\underline{A}ware \underline{T}ask \underline{A}ssignment (SATA) problem.

\begin{definition} [Worker]
	A worker $w$ is defined as $<L_w, S_w, P_w>$, where $L_w$ is the location of $w$ which can be described by longitude and latitude, $S_w = <s^w_1, s^w_2, \cdots, s^w_{|S_w|}>$ is the list of skills that $w$ masters, and $P_w = <p^w_1, p^w_2, \cdots, p^w_{|S_w|}>$ is the list of fees for each skill in $S_w$.
\end{definition}

Similar to the definition of a worker, a task is formally defined as follows.

\begin{definition} [Task]
	A task $t$ is defined as $<L_t, S_t, B_t>$, where $L_t$ is the location of $t$ which can be described by longitude and latitude, $S_t = <s^t_1, s^t_2, \cdots, s^t_{|S_t|}>$ is the list of skills that are needed to complete $t$ collaboratively, and $B_t$ is the total monetary budget of $t$.
\end{definition}

Briefly, a worker's reward includes two parts: (1) transportation fee, which is directly proportional to the distance between the worker and the task; (2) labor fee, which is the sum of the fees for the skills used to perform a task.

\begin{definition} [Reward of Worker]
	The reward of task $w$ to perform task $t$ equals $r_w = \gamma \cdot dis(L_w, L_t) + \sum_{s\in S'_w}{p^w_s}$, where $dis(L_w, L_t)$ is the distance between $L_w$ and $L_t$, which can be Euclidean distance or road network distance, $\gamma$ is a global parameter representing the unit transportation fee, and $S'_w$ is the set of skills that $w$ uses to perform the task.
\end{definition}

We define the utility of a task as follows.

\begin{definition} [Utility of Task]
	The utility of task $t$ is defined as $u_t = B_t - \sum_{t\in W_t}r_t$, where $B_t$ is the budget of the task and $\sum_{t\in W_t}r_t$ is the summation of rewards of workers assigned to $t$ if $t$ is completed. If $t$ cannot be finished, the utility is zero.
\end{definition}

We finally define our problem as follows.

\begin{definition} [\underline{S}pecialty-\underline{A}ware \underline{T}ask \underline{A}ssignment (SATA) Problem]
	Given a set of tasks $T$, a set of workers $W$ and a global unit transportation fee $ \gamma $, the problem is to assign workers to tasks to maximize the total utility of the completed tasks and the following constraints should be satisfied:
	\begin{itemize}
		\item \textbf{Specialty Constraint}: a task can be completed as long as the workers assigned to it can cover the required skills of the task;
		\item \textbf{Budget Constraint}: the total rewards of workers assigned to a task cannot exceed the task's total budget;
	\end{itemize}
\end{definition}

We then prove the hardness of SATA problem.

\begin{theorem}
	The SATA problem is NP-hard.
\end{theorem}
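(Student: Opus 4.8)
The plan is to prove NP-hardness by a polynomial-time reduction from the decision version of the classical \textsc{Set Cover} problem: given a universe $U=\{e_1,\dots,e_n\}$, a family $\mathcal{C}=\{C_1,\dots,C_m\}$ of subsets of $U$, and an integer $k$, decide whether some $k$ of the sets in $\mathcal{C}$ cover $U$. \textsc{Set Cover} is a good source problem here because the Specialty Constraint of SATA is literally a covering condition on skills.

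First I would construct the SATA instance. Create a single task $t_0$ whose required skill list $S_{t_0}=\{e_1,\dots,e_n\}$ has one skill per element of $U$, and whose budget is $B_{t_0}=k+1$. For each set $C_i\in\mathcal{C}$ create one worker $w_i$ whose skill list $S_{w_i}$ is exactly $C_i$, with every per-skill fee set to $0$ (or to a negligibly small $\varepsilon$), and place $w_i$ on the unit circle centred at $L_{t_0}$ so that $dis(L_{w_i},L_{t_0})=1$; take the global transportation rate $\gamma=1$. This instance has size polynomial in $n+\sum_i|C_i|$, so the reduction runs in polynomial time.

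The heart of the argument is the role of the transportation term in $r_w=\gamma\cdot dis(L_w,L_t)+\sum_{s\in S'_w}p^w_s$: it is charged once per \emph{assigned} worker, regardless of how many of that worker's skills are used, while in this construction the labour term contributes essentially nothing. Hence if a subset $I\subseteq\{1,\dots,m\}$ of workers is assigned to $t_0$, the total reward is (up to the negligible $\varepsilon$ term) $|I|$; the Specialty Constraint is satisfied iff $\bigcup_{i\in I}C_i=U$; and the Budget Constraint $|I|\le B_{t_0}=k+1$ forces $|I|\le k$ (since $|I|$ is an integer and the labour slack is below $1$). Consequently $t_0$ can be completed with strictly positive utility $B_{t_0}-|I|$ if and only if some $|I|\le k$ of the sets cover $U$. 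Since $t_0$ is the only task, the optimal total utility of the SATA instance is strictly positive exactly when the \textsc{Set Cover} instance is a yes-instance, which yields the desired reduction (the natural decision version asking whether the total utility is at least $V$ is obtained by taking, say, $V=1$).

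I expect the main obstacle to be conceptual rather than computational: one must recognize that the fine-grained per-skill pricing does \emph{not} by itself make even a single task hard — ignoring transportation, each required skill could independently be routed to its cheapest available worker in polynomial time — so the hardness has to be channeled through the per-worker transportation cost together with the Budget Constraint, which is exactly what converts ``cover all skills'' into ``cover all skills using few workers.'' The only remaining care is in the boundary cases, namely that a task whose reward equals $B_{t_0}$ is still completed but has utility $0$, and that the empty assignment is always feasible with utility $0$; this is why the budget is set to $k+1$ rather than $k$, and checking these cases is routine.
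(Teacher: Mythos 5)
Your proposal is correct, and it rests on the same core construction as the paper's proof: a reduction from set cover with a single task whose required skill list is the universe, one worker per subset, all per-skill fees (essentially) zero, and the hardness channeled entirely through the per-worker transportation charge --- exactly the observation you flag as the conceptual crux, and the same one the paper exploits. The two arguments differ in how they formalize the reduction. The paper starts from \emph{weighted} set cover, gives the task an infinite budget, encodes each set's cost $c_i$ as worker $w_i$'s transportation fee, and argues that maximizing the task's utility is the same as minimizing the total cost of a cover; this is an optimization-to-optimization reduction and never actually uses the budget constraint. You instead start from the cardinality decision version, use unit transportation costs, and encode the threshold $k$ in a finite budget $k+1$, so that the set cover instance is a yes-instance iff the optimal total utility is strictly positive. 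Your version is the more careful one as a decision-problem reduction (it avoids the slightly informal ``infinite budget'' and the implicit appeal to an optimization oracle), at the cost of handling the boundary case you correctly identify: a cover of size exactly $k+1$ is feasible but yields utility $0$, which is why the budget must be $k+1$ rather than $k$. One small wording slip: the budget constraint by itself does not force $|I|\le k$ (it permits $|I|=k+1$); it is the requirement of \emph{strictly positive} utility that does, as your final paragraph in effect acknowledges. Neither issue is a gap in the argument.
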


\begin{proof}
	We prove through a reduction from the set cover problem  
	\cite{ApproALG}
	
	We first introduce the set cover problem. Given a universe $U = \{s_1, s_2, \cdots, s_n\}$ and its $m$ subsets $S_1, S_2, \cdots, S_m \subseteq U$, $\cup_{i=1}^m{S_i}=U$. Each $S_i$ is associated with a cost $c_i$. The set cover problem is to find a set $K \subseteq \{1, 2, \cdots, m\}$ to minimize $\sum_{i\in K}c_i$ satisfying $\cup_{i\in K}S_i=U$.
	
	We next show how to transform the set cover problem to an instance of our SATA problem. We only have one task $t$ which requires skills $S_t = U$ and has infinite budget $B_t$. For $m$ workers $\{w_1, w_2, \cdots, w_m\}$, their required fees for skills are all zero, and we adjust their locations and $\gamma$ to make their transportation fee to perform $t$ is $c_i$. For this instance of our SATA problem, we aim to find a set of workers $K$ to maximize the utility of $t$, which equals to minimize $\sum_{i\in K}c_i$. In this way, we reduce set cover problem to our SATA problem. As the set cove problem is known to be NP-hard \cite{ApproALG}, SATA problem is also NP-hard.
\end{proof}

\section{Algorithms}
In this section, we give two efficient heuristic algorithms to solve the SATA problem. 

\subsection{Total Budget Based Algorithm}
Our first algorithm is called the \underline{T}otal \underline{B}udget Based \underline{A}lgorithm (TBA). The main idea is that we always try to assign workers to the tasks with the largest budget. During the procedure of task assignment, we refer to the greedy algorithm to solve the set cover problem \cite{ApproALG}.

The procedure of TBA is shown in algorithm-\ref{alg:TBA}. The algorithm takes the set of workers $W$ and set of tasks $T$ as input, and return an assignment $M$ between them as shown in lines 1-2. In line 3, the algorithm first sorts the tasks in $T$ in descending order according to their budgets, and the sorted result is saved in $Q$. In lines 4-13, for each task $t$ in $Q$, we refer to the greedy algorithm to solve the set cover problem \cite{ApproALG} to assign workers. Specifically, in lines 5, we find worker with minimum $\frac{r_w}{|S'_w\cap S_t|}$. Notes that here $S'w$ considers all possible subsets of $S_w$. In lines 6-7, we update $M$, $W$ and $S_t$. In lines 9-11, if $S_t$ is $\emptyset$, which means it can be completed, we break the loop and start to assign workers for the next task.

\begin{algorithm}[t]
	\SetKwInOut{Input}{input}\SetKwInOut{Output}{output}
	\Input{set of workers $W$, set of tasks $T$}
	\Output{Assignment $M$}
	$Q \leftarrow$ sorting tasks in $T$ according to their budgets in descending order\;
	\ForEach{$t$ in $Q$}{
		Assign $w \in W$ to $t$ with minimum $\frac{r_w}{|S'_w\cap S_t|}$\;
		Update $M$ and $W$\;
		$S_t \leftarrow S_t - S'_w$\;
		\If{$S_t$ is $\emptyset$}{
			Break\;
		}
		
	}
	\Return{$M$} 
	\caption{\underline{T}otal \underline{B}udget Based \underline{A}lgorithm (TBA)}
	\label{alg:TBA}
\end{algorithm}


\begin{example}
	Back to our running example in Example 1. TBA first finds the task with the largest total budget, which is $t_3$. The it starts to assign workers for $t_3$. As $w_3$ has the minimal $\frac{r_w}{|S'_w\cap S_t|}$ of 2, we first assign $w_3$ to $t_3$. After assigning $w_3$, $t_3$'s list of skills has not been covered, thus we assign $w_5$ to $t_3$ with $\frac{r_w}{|S'_w\cap S_t|}$ of $\frac{7}{3}$. We finally assign $w_4$ to $t_3$ and the total reward paid to $w_3$, $w_4$ and $w_5$ is $2+2+2+3+1+(\sqrt{2}+4+\sqrt{10})\times0.5\approx15$. Thus, the utility of $t_3$ is $30-15=15$. Similarly, we  assign workers to $t_2$ and $t_3$ successively, and the final utility of TBA is 21.08.
\end{example}

\textbf{Complexity.} If we take the maximum number of skills a worker may have as a constant, the time complexity of TBA is $O(max\{|T|log|T|, |T||W|\})$.

\subsection{Average Budget Based Algorithm}
The TBA algorithms only considers the total budget of tasks. However, a large budget may result from a large number of skills required in the task. Thus, in this subsection, we propose another algorithm, called \underline{A}verage \underline{B}udget Based \underline{A}lgorithm (ABA). The main idea is that we first measure the average budget of all the tasks, and prefer to assign workers to tasks with a larger average budget.

\begin{algorithm}[t]
	\SetKwInOut{Input}{input}\SetKwInOut{Output}{output}
	\Input{set of workers $W$, set of tasks $T$}
	\Output{Assignment $M$}
	$Q \leftarrow$ sorting tasks in $T$ according to their budgets in descending order\;
	\ForEach{$t$ in $Q$}{
		Assign $w \in W$ to $t$ with minimum $\frac{r_w}{|S'_w\cap S_t|}$\;
		Update $M$ and $W$\;
		$S_t \leftarrow S_t - S'_w$\;
		\If{$S_t$ is $\emptyset$}{
			Break\;
		}
	}
	\Return{$M$} 
	\caption{\underline{A}verage \underline{B}udget Based \underline{A}lgorithm (ABA)}
	\label{alg:ABA}
\end{algorithm}


The pseudo codes of ABA is shown in Algorithm-2. The biggest difference between TBA and ABA lies on line 3. In TBA, we first sort tasks in $T$ based on average budget, which is defined as $\frac{B_t}{|S_t|}$. The procedure of how to assign workers to a given task is the same as TBA, which is shown in lines 4-13.

\begin{example}
	Back to our running example in Example 1. Different from TBA, ABA first finds the task with the largest average budget, which is $t_1$. Then it assigns workers to $t_1$. As $w_5$ has the minimal $\frac{r_w}{|S'_w\cap S_t|}$ of 2, we first assign $w_5$ to $t_1$. After assigning $w_5$, we find $t_1$'s list of skills has been covered, thus the total utility is $20-(2+2)-(\sqrt{13}\times 0.5\approx14.20)$. Similarly, we next assign workers for $t_2$ and $t_3$, and the final utility of TBA is 25.78.
\end{example}

\textbf{Complexity.} If we take the maximum number of skills a worker may have as a constant, the time complexity of ABA is also $O(max\{|T|log|T|, |T||W|\})$.
\section{Evaluation}

\subsection{Experiment Setup}
We use real and synthetic datasets to evaluate our algorithms. Real data comes from CSTO (http://www.csto.com/), which is an outsource task platform. In the CSTO dataset, each task is associated with a set of skills needed to complete the software development task, and each coder is associated with a set of skills and an average price which can be deduced from the history data. Since the CSTO data is not associated with location information, we generate the distance of each coder from the task following uniform distribution. For synthetic data, based on the observation from real data set, the price of skills owned by a worker and the budget of a task follow Gaussian distribution, respectively. Statistics of the synthetic data are shown in Table3, where we mark our default settings in bold font.

\begin{table}[t]
	\caption{Synthetic Dataset}
	\label{Table:1}
	\centering
	\begin{tabular}{|c|c|}
		\hline
		Factor  & Setting    \\
		\hline
		$|T|$     & 100 300 \textbf{500} 700 900                          \\
		\hline
		$|W|$     & 1000 3000 \textbf{5000} 7000 9000                     \\
		\hline
		$\gamma$  & 0.1 0.3 \textbf{0.5} 0.7 0.9                          \\
		\hline
		$B_t$     & 60 80 \textbf{100} 120 140                         \\
		\hline
		$P_w$     & 10 15 \textbf{20} 25 30                          \\
		\hline
		$|S|$	  & 10 20 \textbf{30} 40 50								\\
		\hline	
	\end{tabular}
\end{table}


\begin{figure*}[htb]
	\centering
	\subfloat[\scriptsize{Cardinality of varying $|T|$}]{
		\centering
		\includegraphics[scale=0.3]{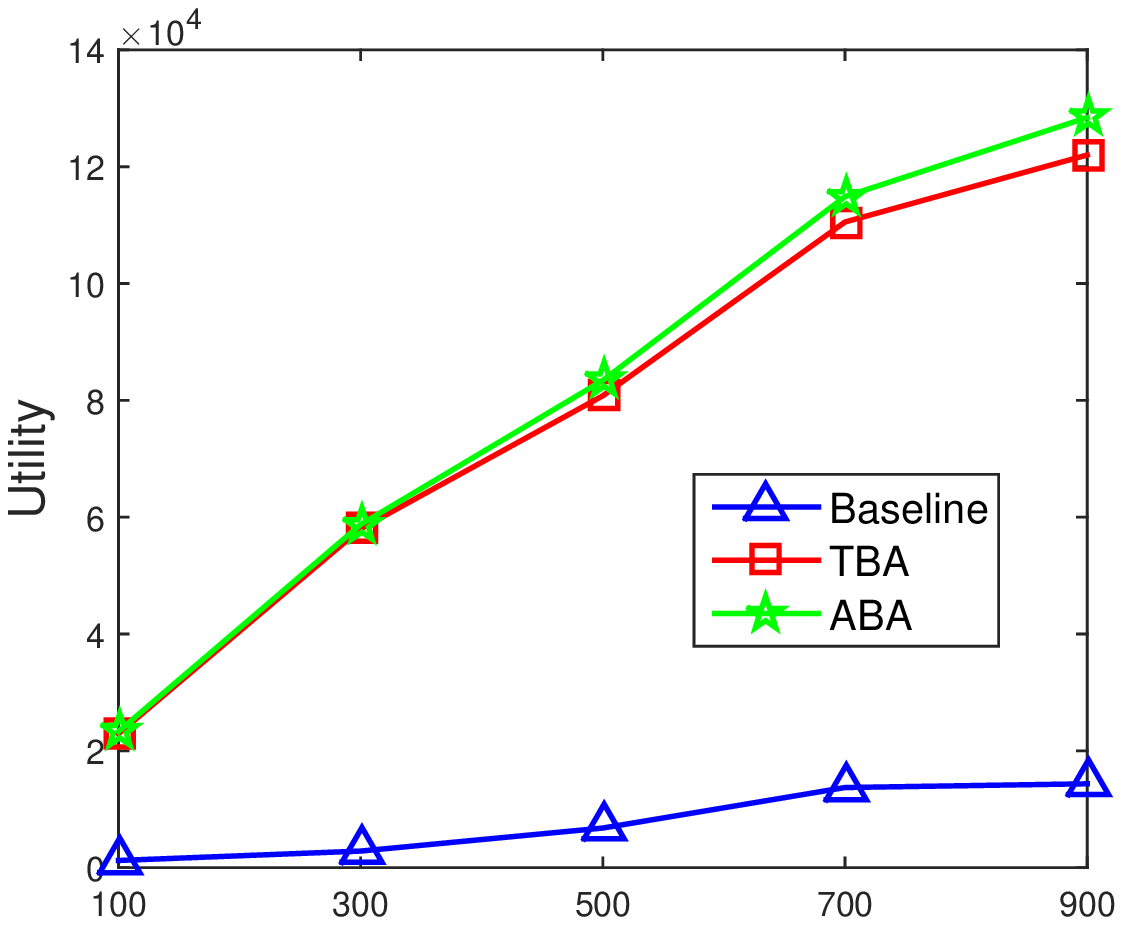}
		\label{fig:result1a}
	}
	\subfloat[\scriptsize{Running Time of varying $|T|$}]{
		\centering
		\includegraphics[scale=0.3]{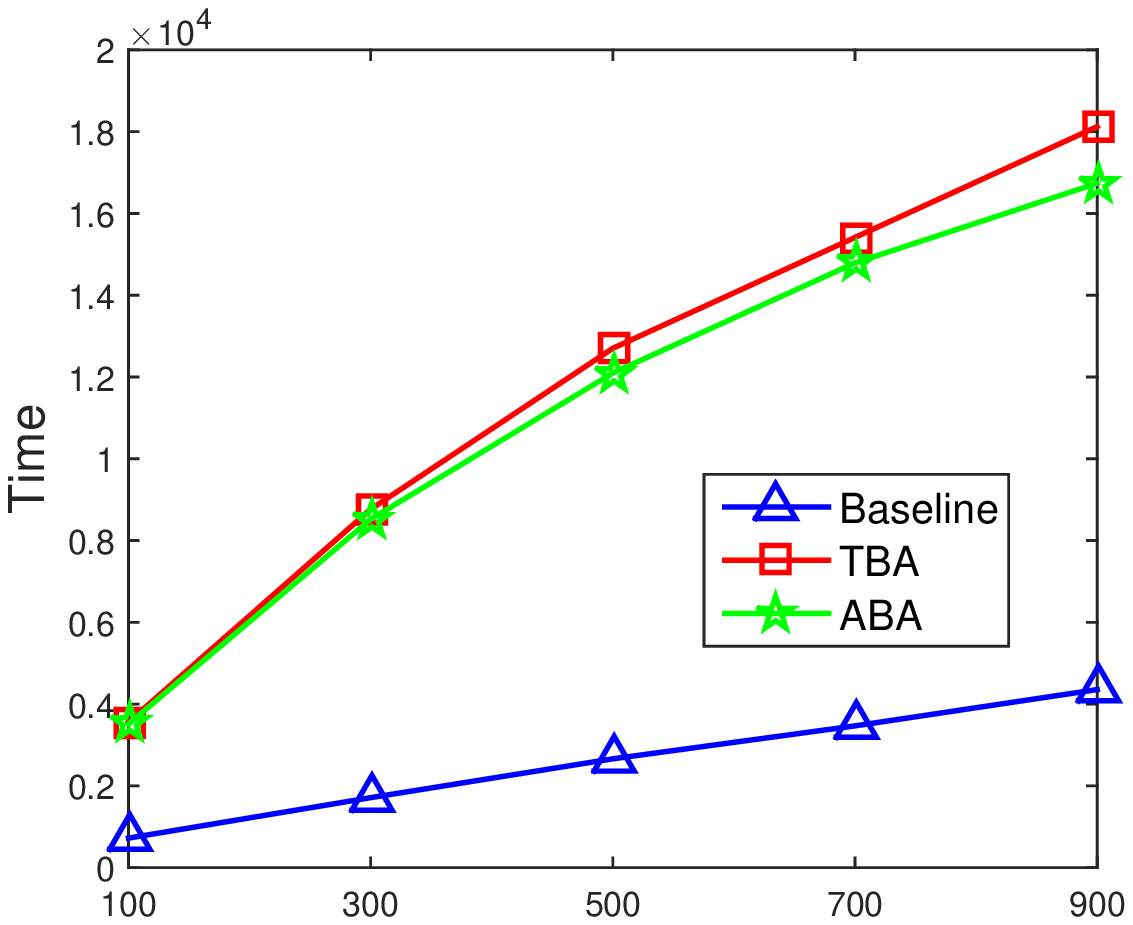}
		\label{fig:result1b}
	}
	\subfloat[\scriptsize{Memory of varying $|T|$}]{
		\centering
		\includegraphics[scale=0.3]{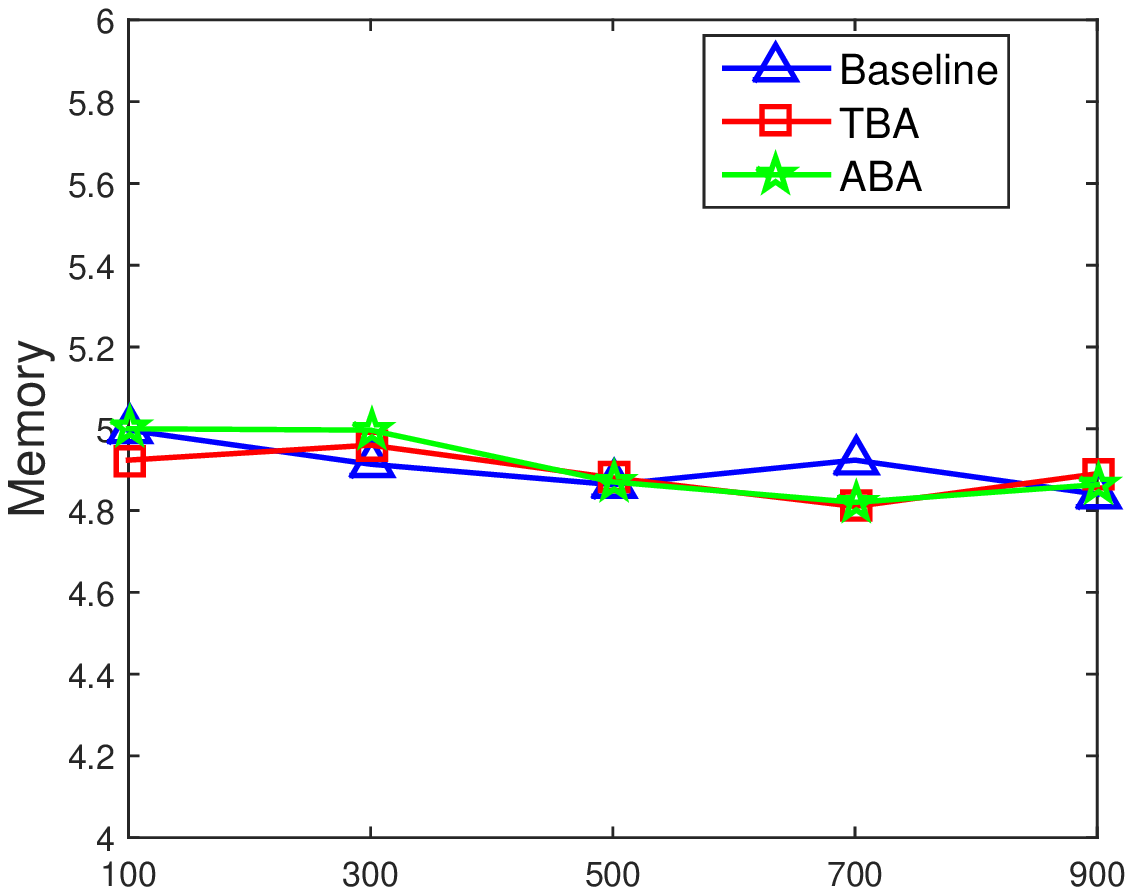}
		\label{fig:result1c}
	}
	\caption{Results on varying $|T|$.}
\end{figure*}

\begin{figure*}[htb]
	\centering
	\subfloat[\scriptsize{Cardinality of varying $|W|$}]{
		\centering
		\includegraphics[scale=0.3]{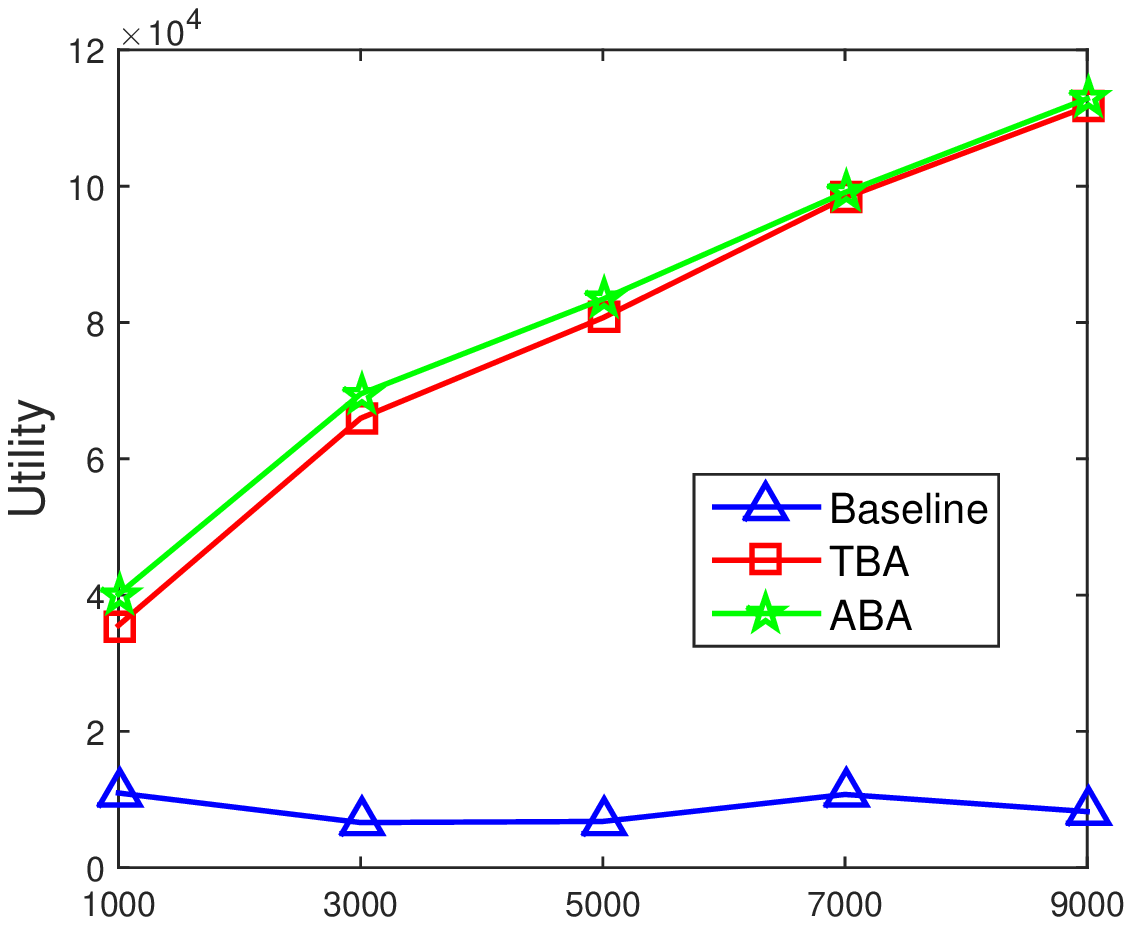}
		\label{fig:result1d}
	}
	\subfloat[\scriptsize{Running Time of varying $|W|$}]{
		\centering
		\includegraphics[scale=0.3]{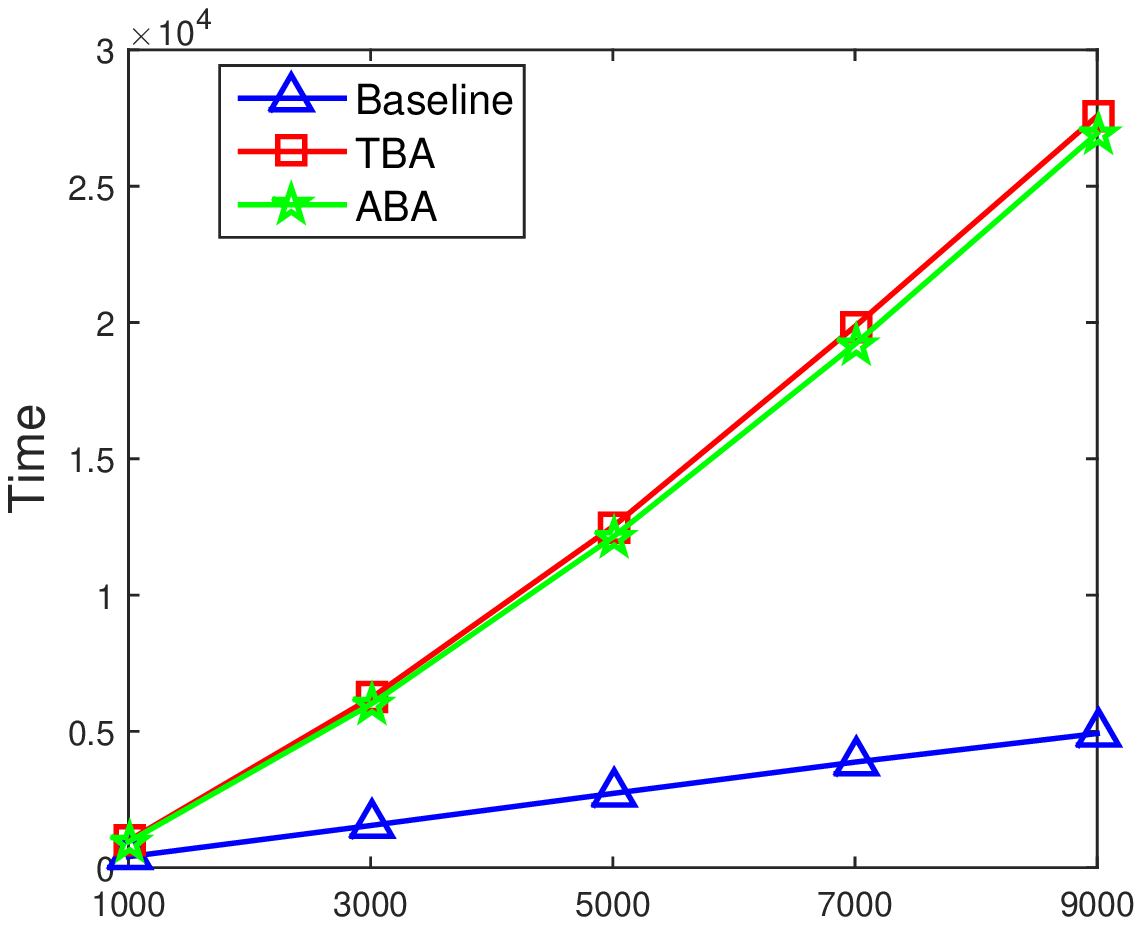}
		\label{fig:result1e}
	}
	\subfloat[\scriptsize{Memory of varying $|W|$}]{
		\centering
		\includegraphics[scale=0.3]{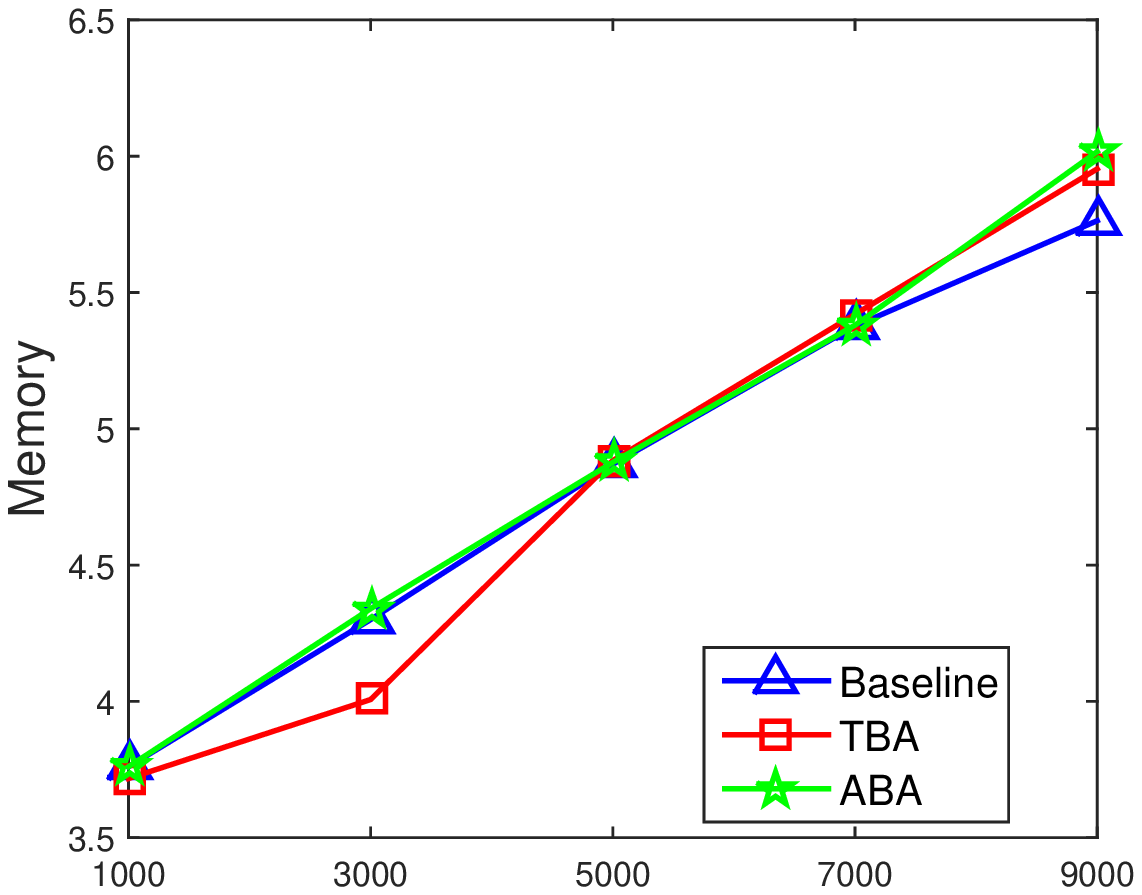}
		\label{fig:result1f}
	}
	\caption{Results on varying $|W|$.}
\end{figure*}

\begin{figure*}[htb]
	\centering
	\subfloat[\scriptsize{Cardinality of varying $\gamma$}]{
		\centering
		\includegraphics[scale=0.3]{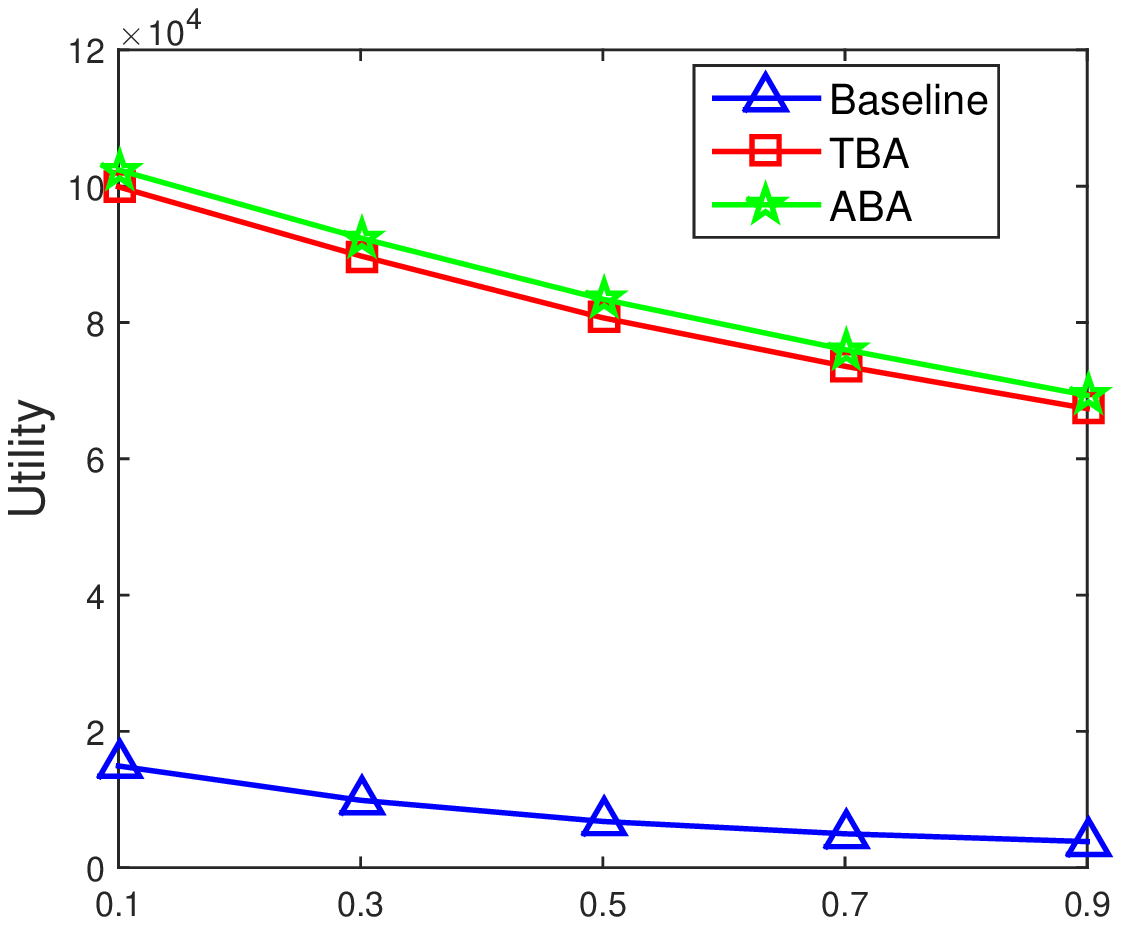}
		\label{fig:result1g}
	}
	\subfloat[\scriptsize{Running Time of varying $\gamma$}]{
		\centering
		\includegraphics[scale=0.3]{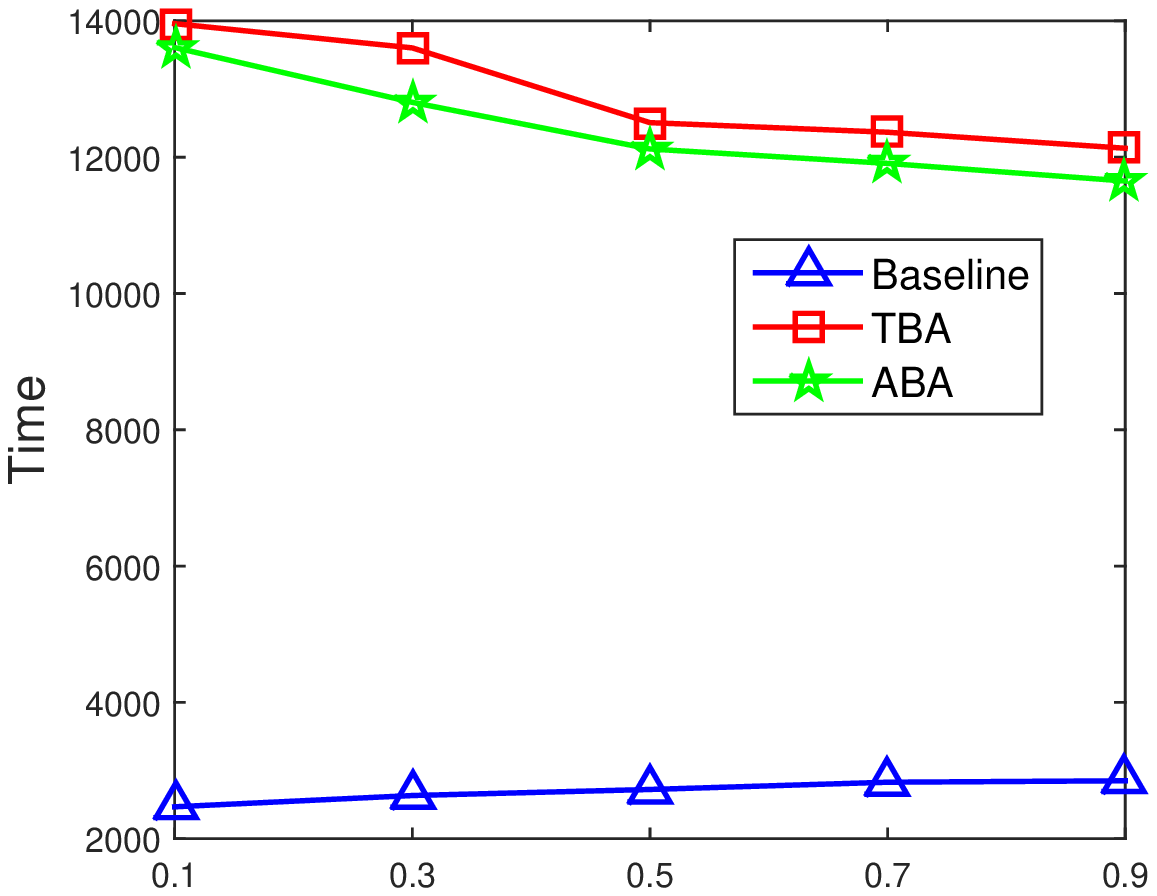}
		\label{fig:result1h}
	}
	\subfloat[\scriptsize{Memory of varying $\gamma$}]{
		\centering
		\includegraphics[scale=0.3]{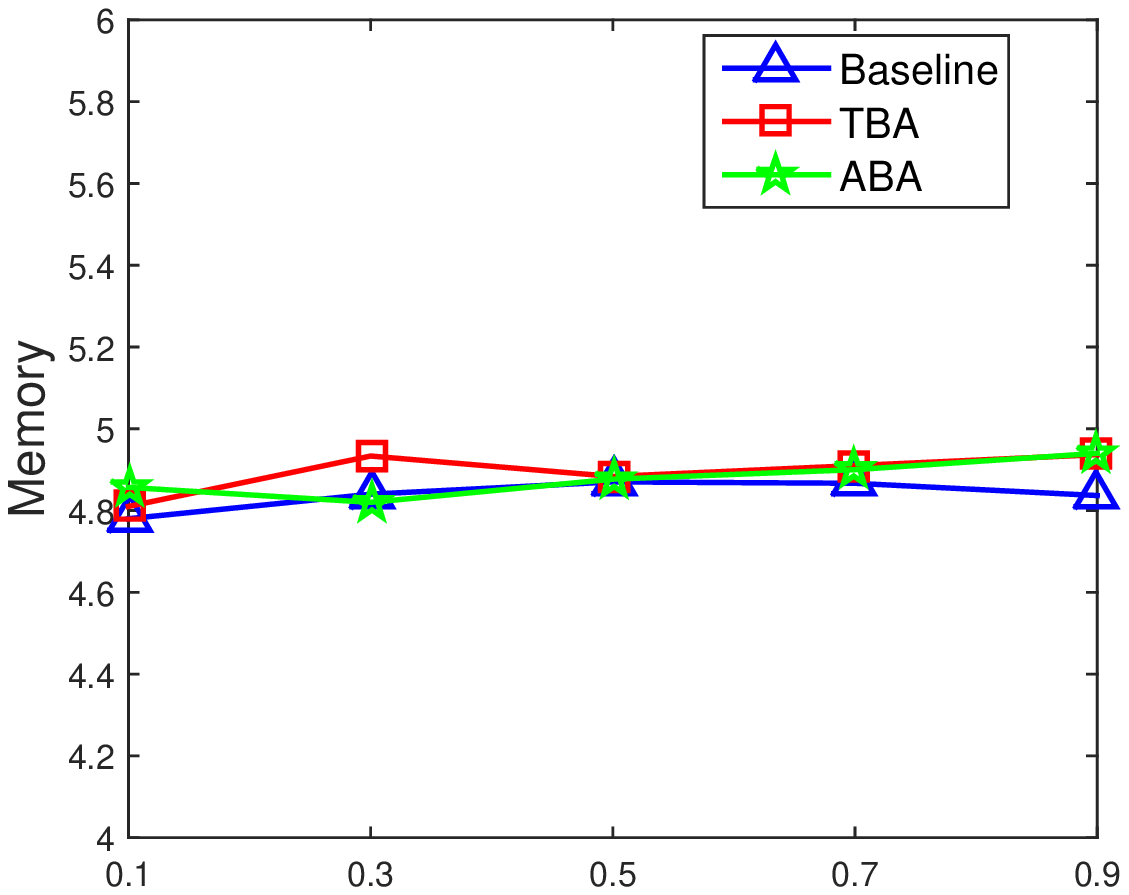}
		\label{fig:result1i}
	}
	\caption{Results on varying $\gamma$.}
\end{figure*}

\subsection{Experiment Results}

In this subsection, we test the performance of our proposed algorithms by setting different parameters. We evaluate two exact algorithm, called TBA and ABA, and a baseline algorithm in terms of total utility score, running time and memory cost, and study the effect of varying parameters on the performance of the algorithms. The baseline algorithm uses a simple random strategy, which assigns workers to tasks randomly. The algorithms are implemented in CodeBlocks16.1, and the experiments are performed on a machine with Intel(R) Core(TM) i5 2.50GHZ CPU and 8GB main memory.


\begin{figure*}[htb]
	\centering
	\subfloat[\scriptsize{Cardinality of varying $B_t$}]{
		\centering
		\includegraphics[scale=0.3]{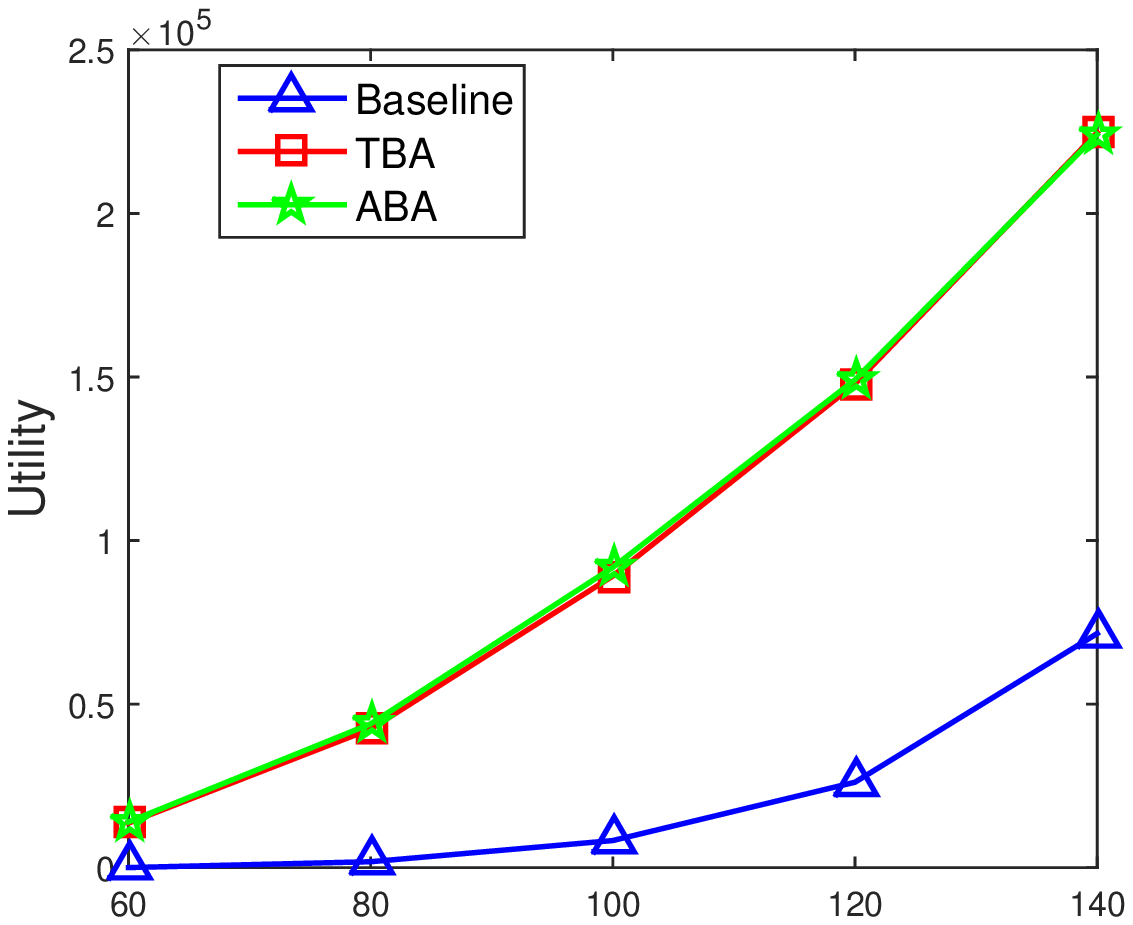}
		\label{fig:result2a}
	}
	\subfloat[\scriptsize{Runnint Time of varying $B_t$}]{
		\centering
		\includegraphics[scale=0.3]{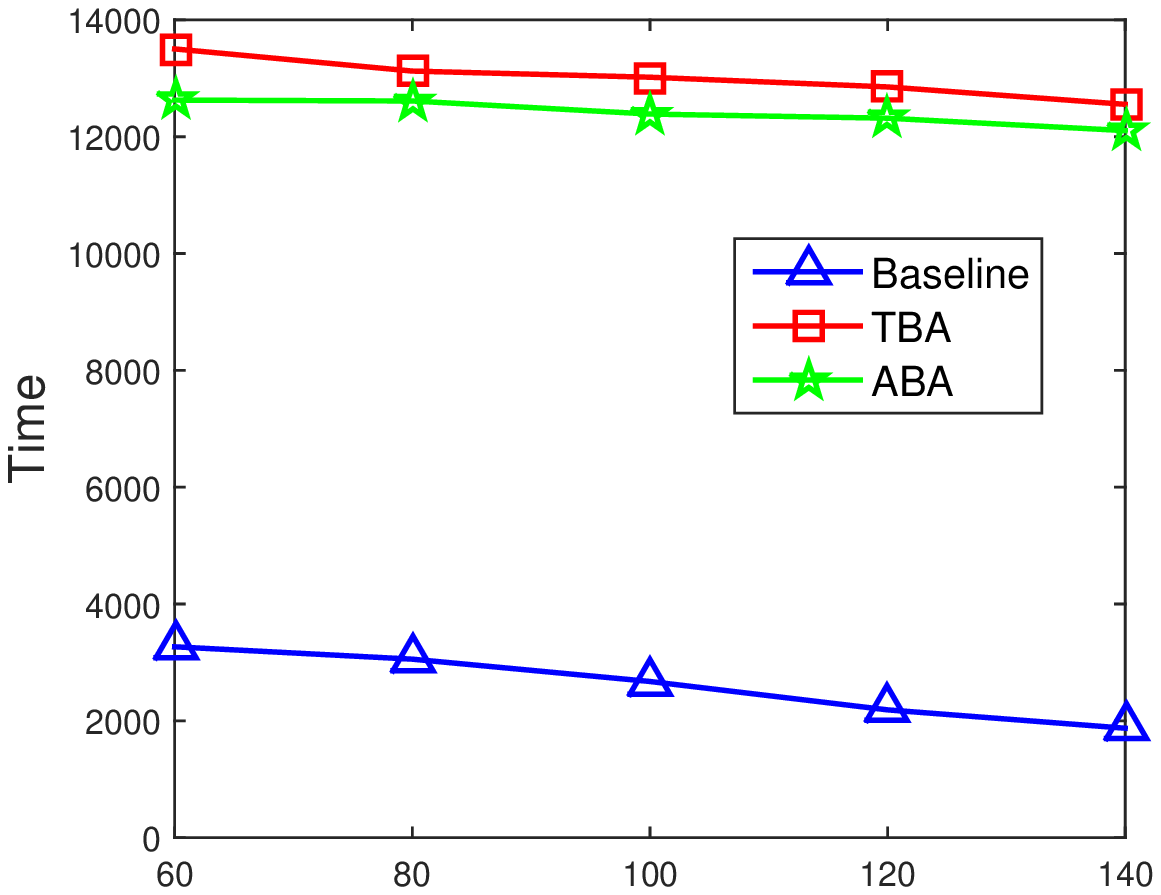}
		\label{fig:result2b}
	}
	\subfloat[\scriptsize{Memory of varying $B_t$}]{
		\centering
		\includegraphics[scale=0.3]{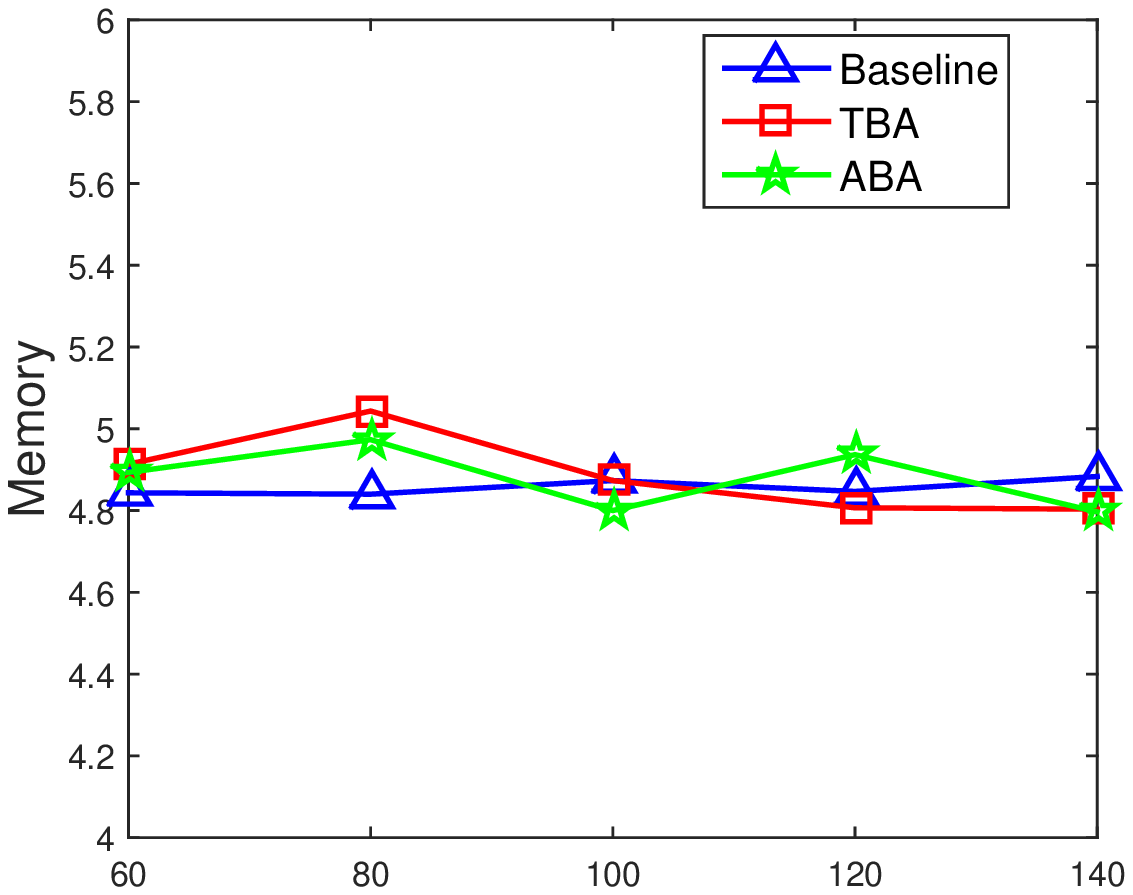}
		\label{fig:result2c}
	}
	\caption{Results on varying $B_t$.}
\end{figure*}

\begin{figure*}[htb]
	\centering
	\subfloat[\scriptsize{Cardinality of varying $P_w$}]{
		\centering
		\includegraphics[scale=0.3]{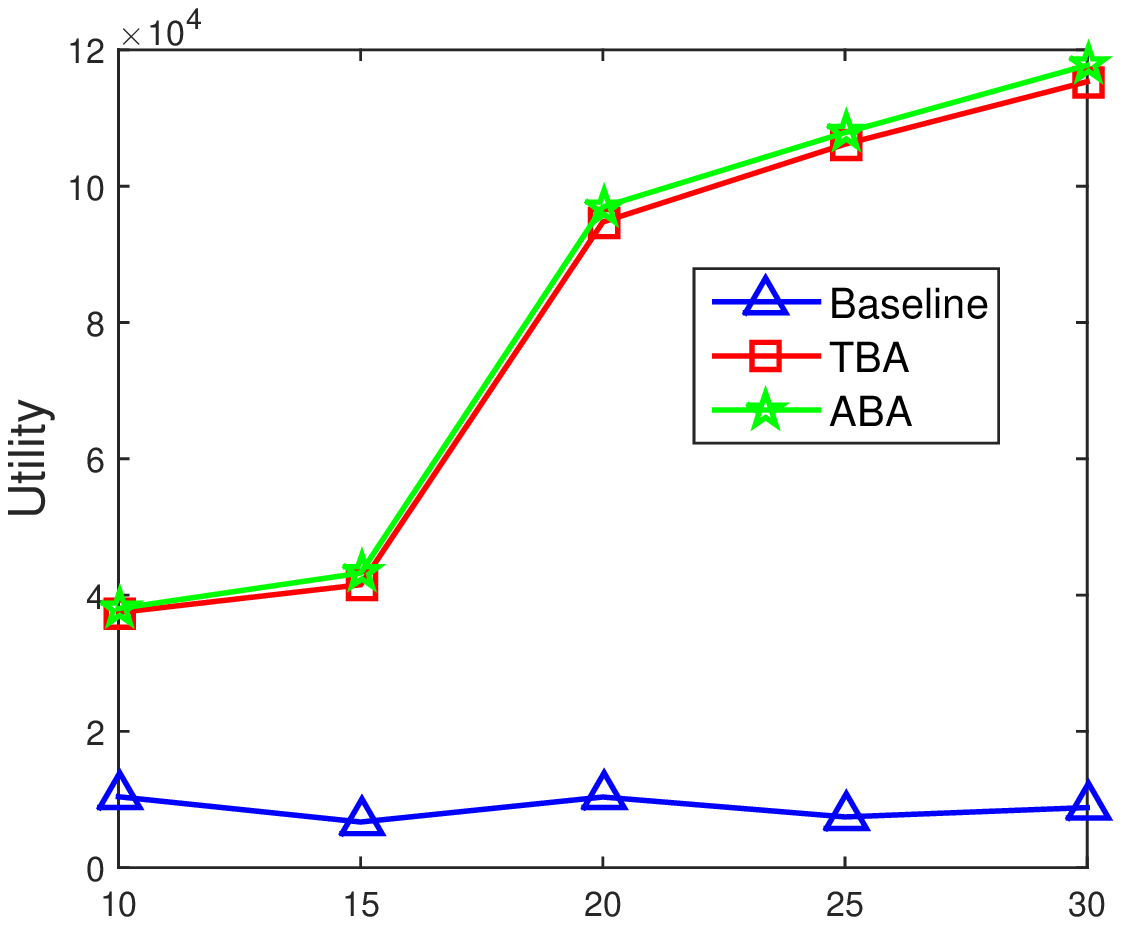}
		\label{fig:result2d}
	}
	\subfloat[\scriptsize{Running Time of varying $P_w$}]{
		\centering
		\includegraphics[scale=0.3]{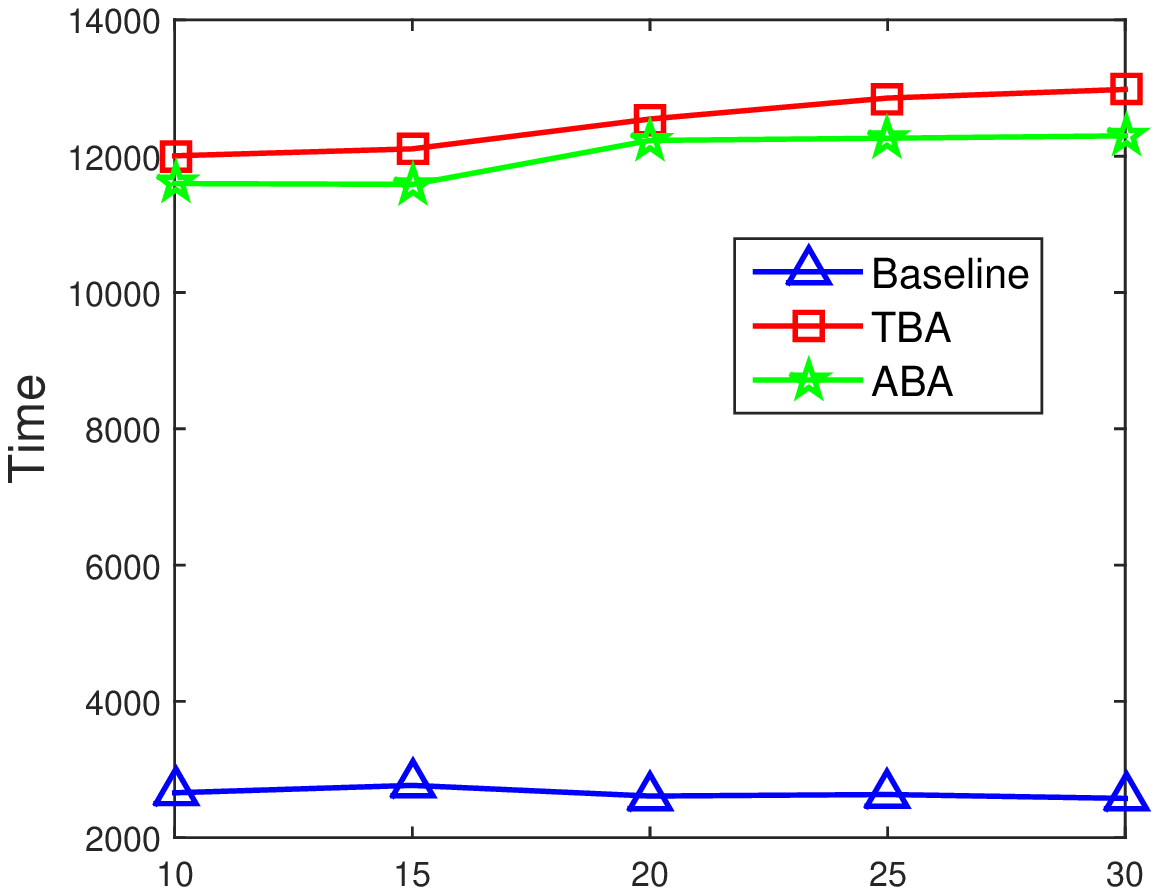}
		\label{fig:result2e}
	}
	\subfloat[\scriptsize{Memory of varying $P_w$}]{
		\centering
		\includegraphics[scale=0.3]{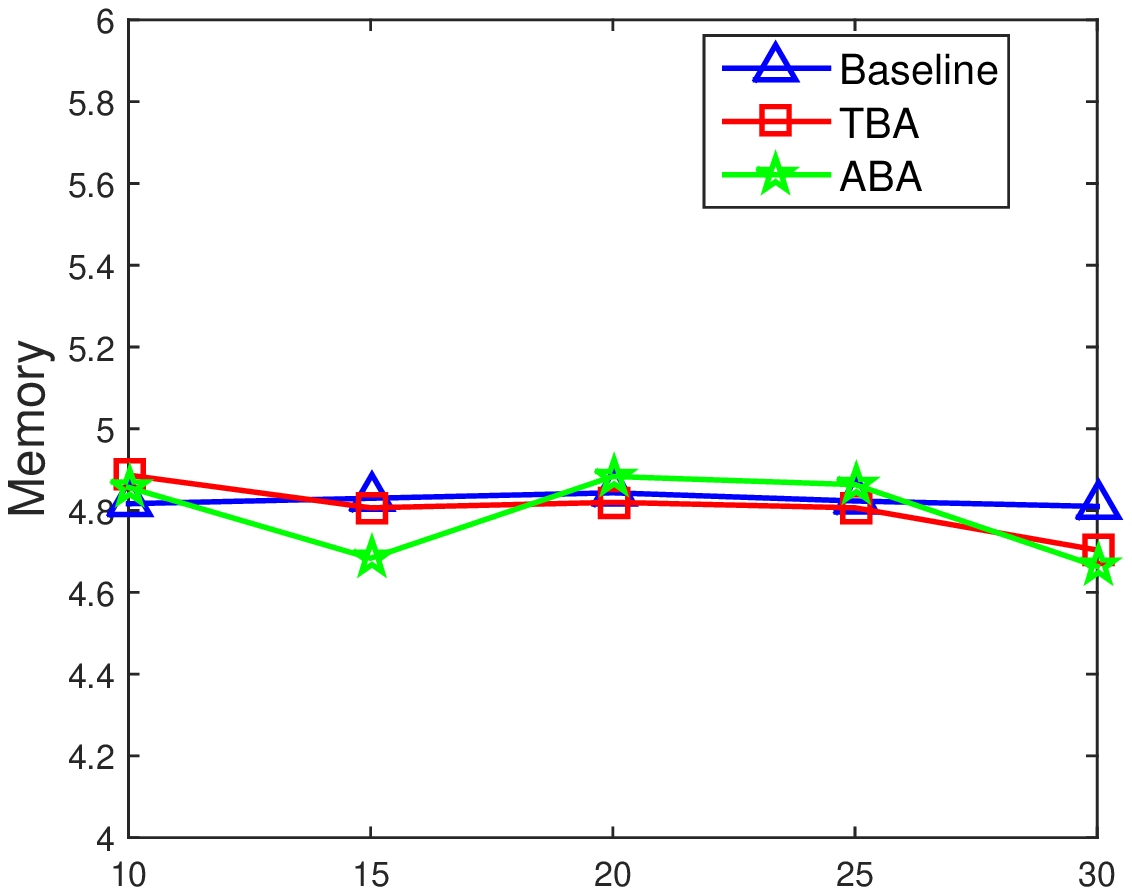}
		\label{fig:result2f}
	}
	\caption{Results on varying $P_w$.}
\end{figure*}

\begin{figure*}[htb]
	\centering
	\subfloat[\scriptsize{Cardinality of varying $|S|$}]{
		\centering
		\includegraphics[scale=0.3]{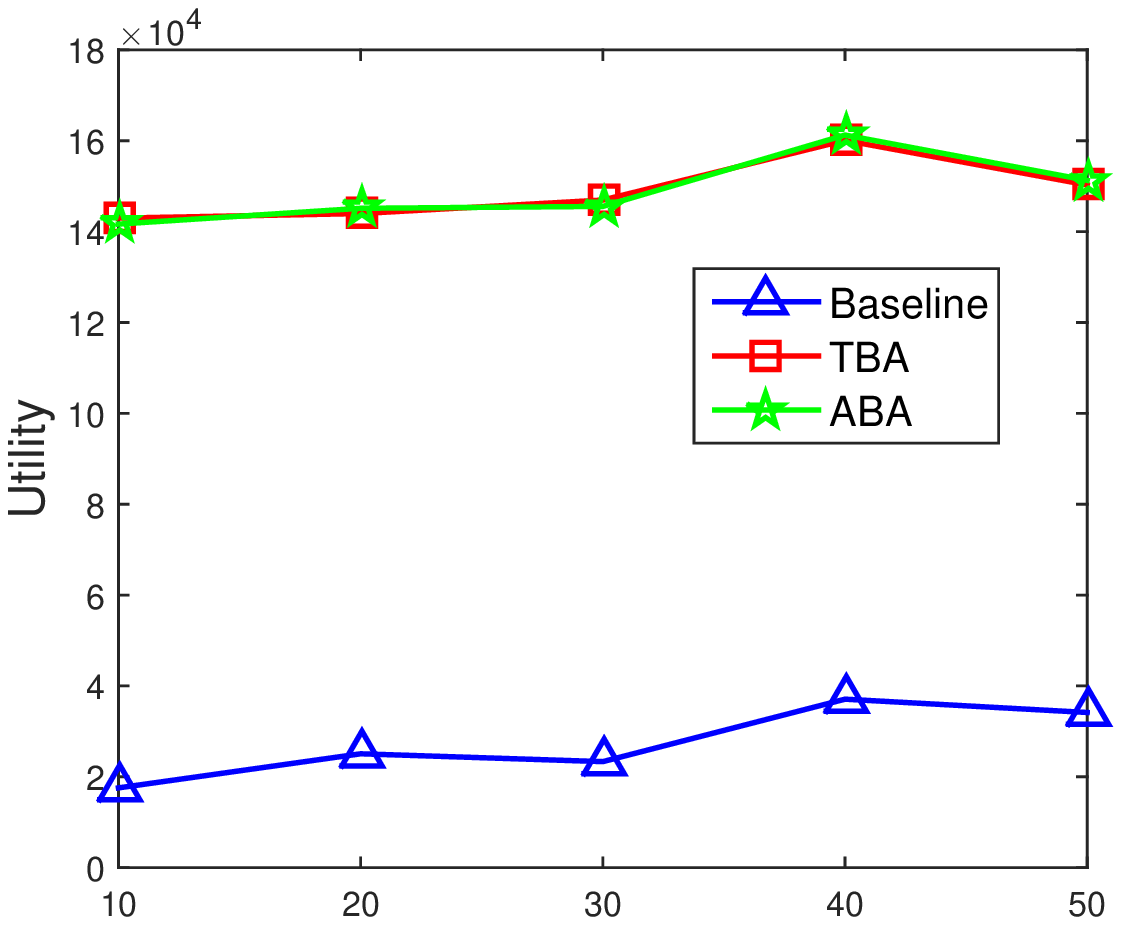}
		\label{fig:result2g}
	}
	\subfloat[\scriptsize{Runnint Time of varying $|S|$}]{
		\centering
		\includegraphics[scale=0.3]{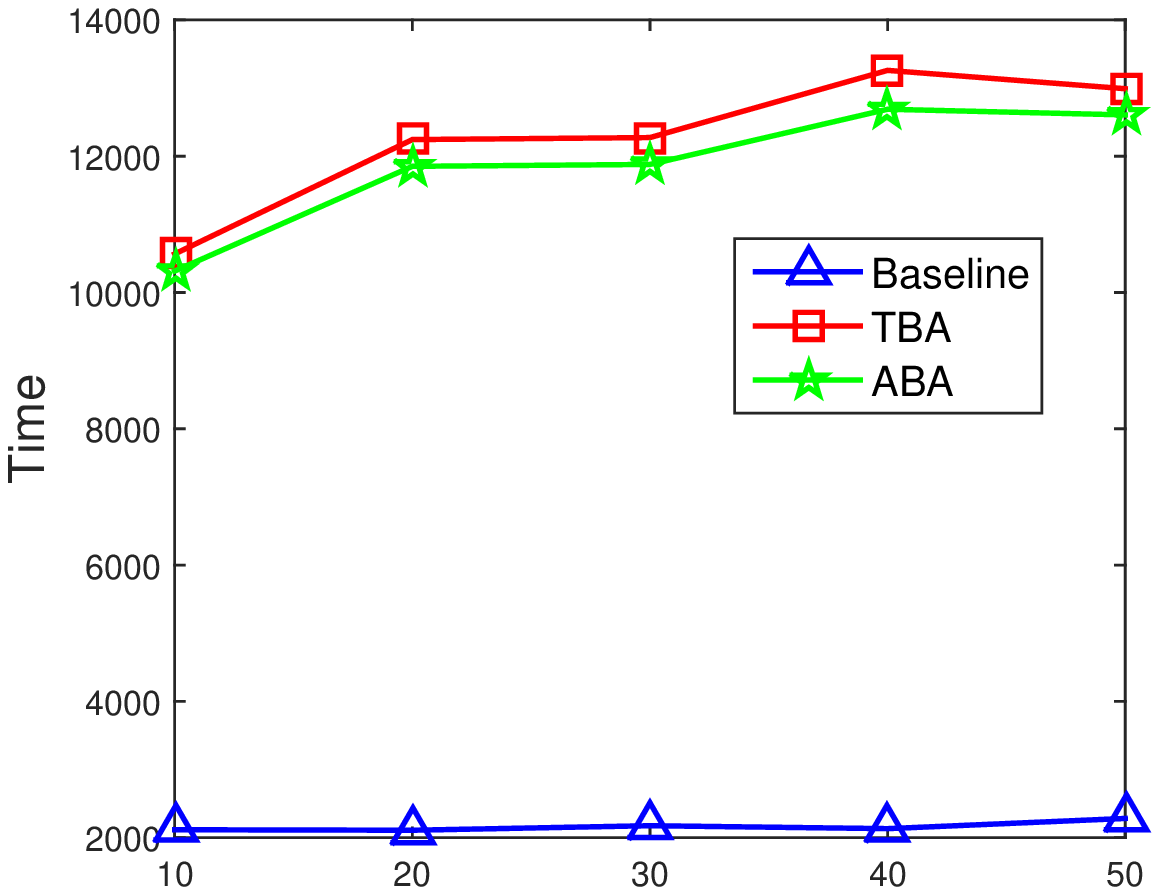}
		\label{fig:result2h}
	}
	\subfloat[\scriptsize{Memory of varying $|S|$}]{
		\centering
		\includegraphics[scale=0.3]{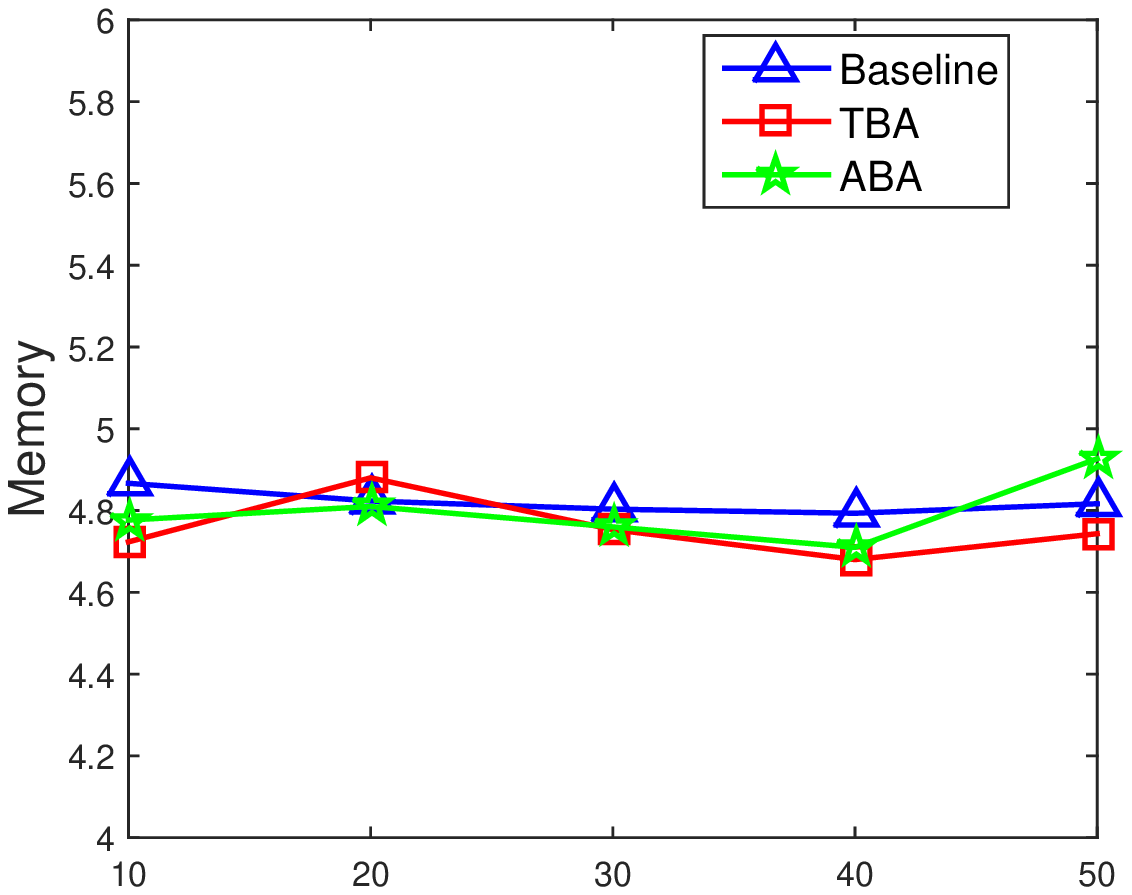}
		\label{fig:result2i}
	}
	
	\caption{Results on varying $|S|$.}
	\label{fig:result2}
\end{figure*}

\textbf{Effects of the number of tasks $|T|$.} The results of varying $|T|$ are presented in Fig.\ref{fig:result1a} to \ref{fig:result1c}. First, we can observe that the utility increases as $|T|$ increases, which is reasonable as more tasks available. Also, we can observe that TBA algorithm and ABA algorithm are much better than baseline algorithm and TBA algorithm has advantages over ABA algorithm. As for running time, TBA and ABA are slower than the baseline due to sorting tasks and finding more economic schedule, and the running time is acceptable for better performance on utility. Moreover, TBA is faster than ABA for it is easier to find suitable workers for each tasks. The three algorithm do not vary much in memory consumption.

\textbf{Effects of the number of workers $|W|$.} The results of varying $|W|$ are presented in Fig.\ref{fig:result1d} to \ref{fig:result1f}. We can observe that the utility, running time and memory consumption generally increase as $|W|$ increase, which is reasonable as more workers need to be assigned. Again, we can see that TBA are better than ABA in terms of Utility and running time.

\begin{figure*}[htb]
	\centering
	\subfloat[\scriptsize{Cardinality of varying $|T|$}]{
		\centering
		\includegraphics[scale=0.3]{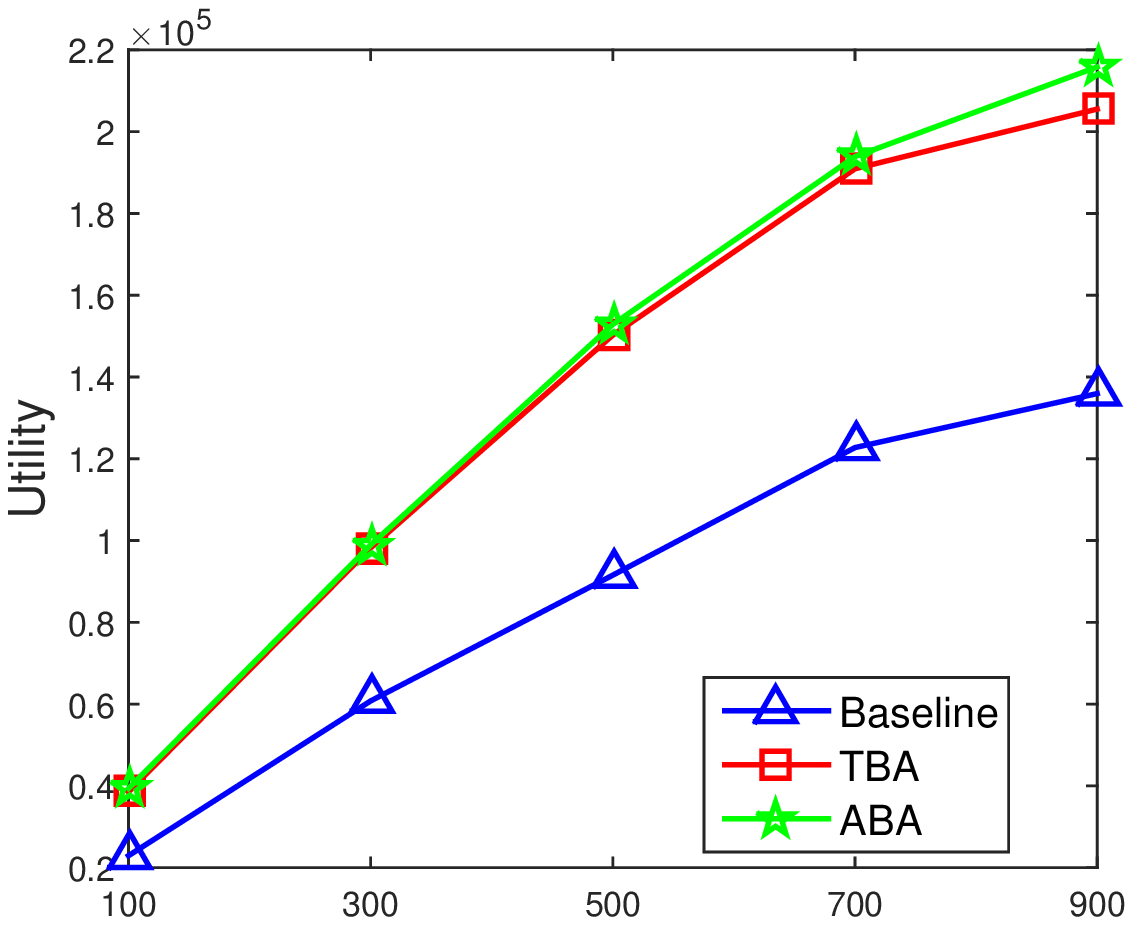}
		\label{fig:result3a}
	}
	\subfloat[\scriptsize{Running Time of varying $|T|$}]{
		\centering
		\includegraphics[scale=0.3]{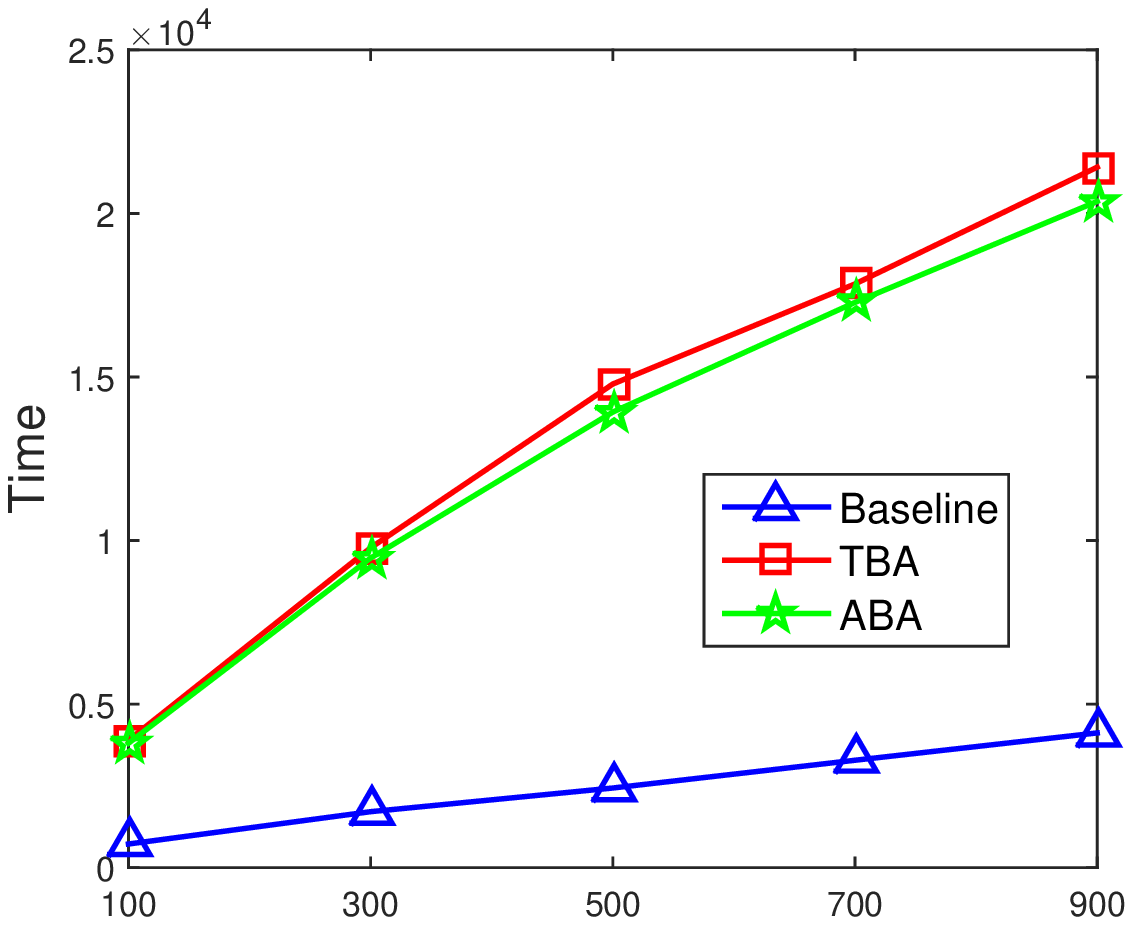}
		\label{fig:result3b}
	}
	\subfloat[\scriptsize{Memory of varying $|T|$}]{
		\centering
		\includegraphics[scale=0.3]{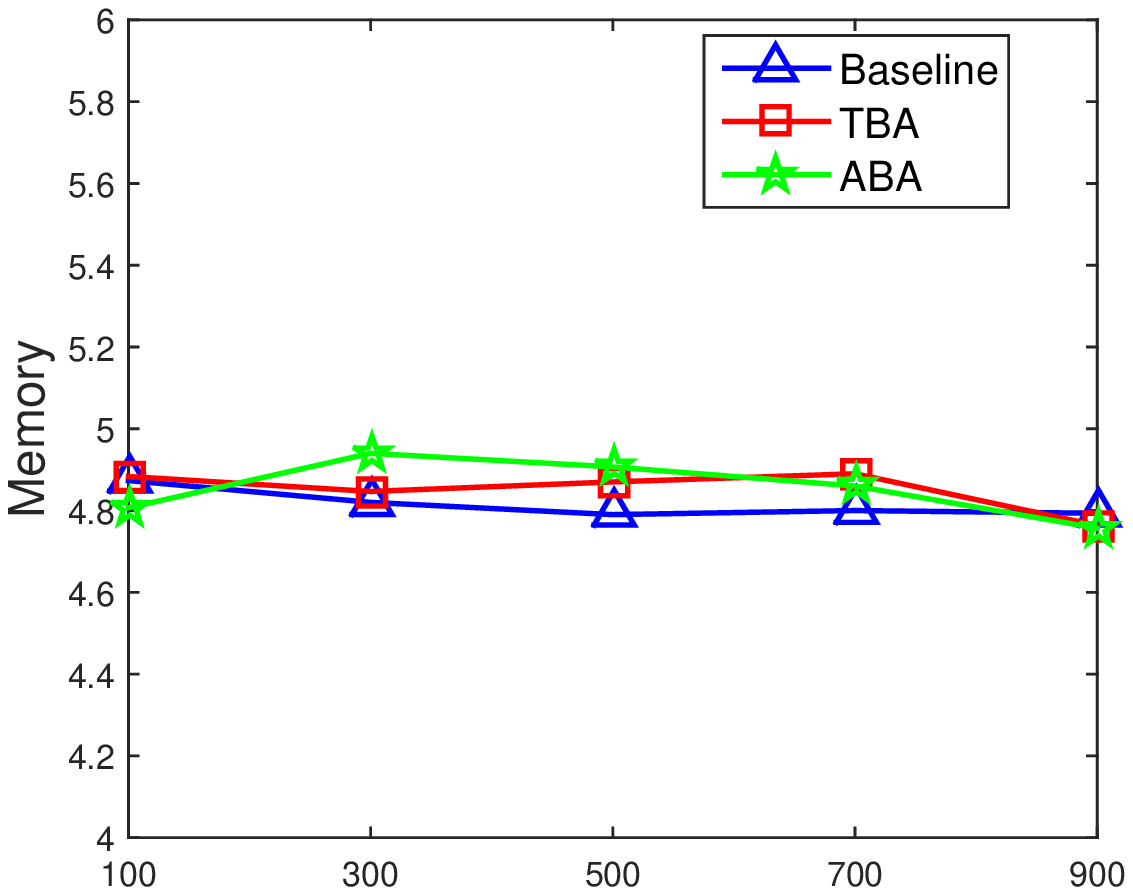}
		\label{fig:result3c}
	}
	\caption{Results of Real Dataset on varying $|T|$.}
\end{figure*}

\begin{figure*}[htb]
	\centering
	\subfloat[\scriptsize{Cardinality of varying $\gamma$}]{
		\centering
		\includegraphics[scale=0.3]{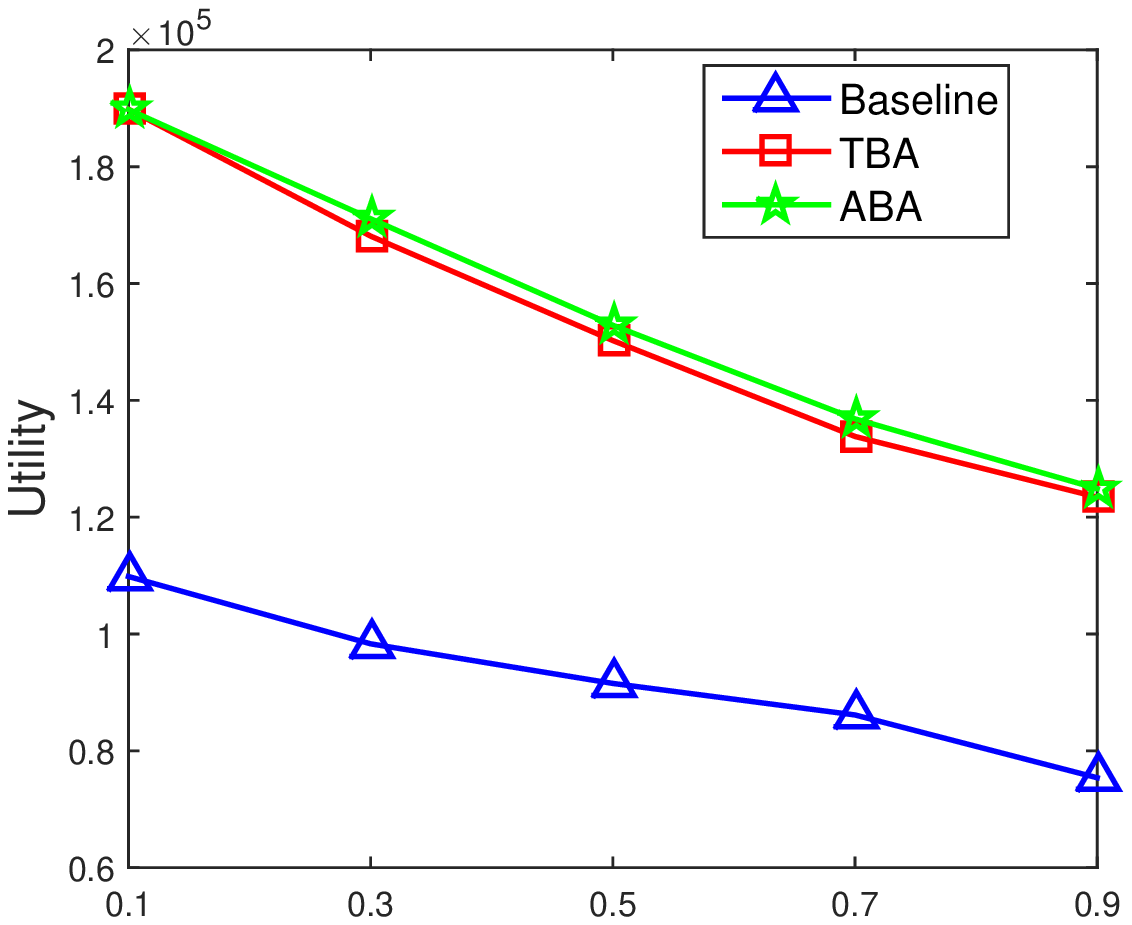}
		\label{fig:result3d}
	}
	\subfloat[\scriptsize{Running Time of varying $\gamma$}]{
		\centering
		\includegraphics[scale=0.3]{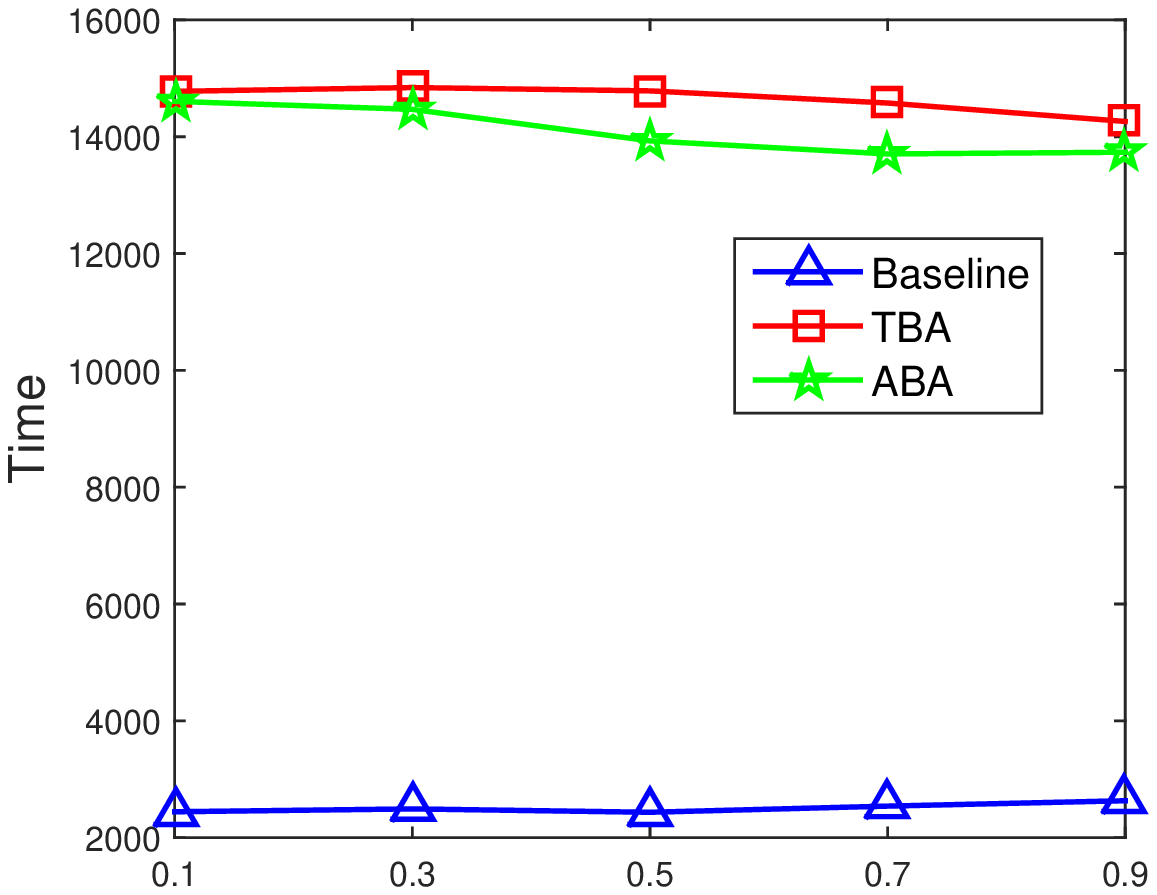}
		\label{fig:result3e}
	}
	\subfloat[\scriptsize{Memory of varying $\gamma$}]{
		\centering
		\includegraphics[scale=0.3]{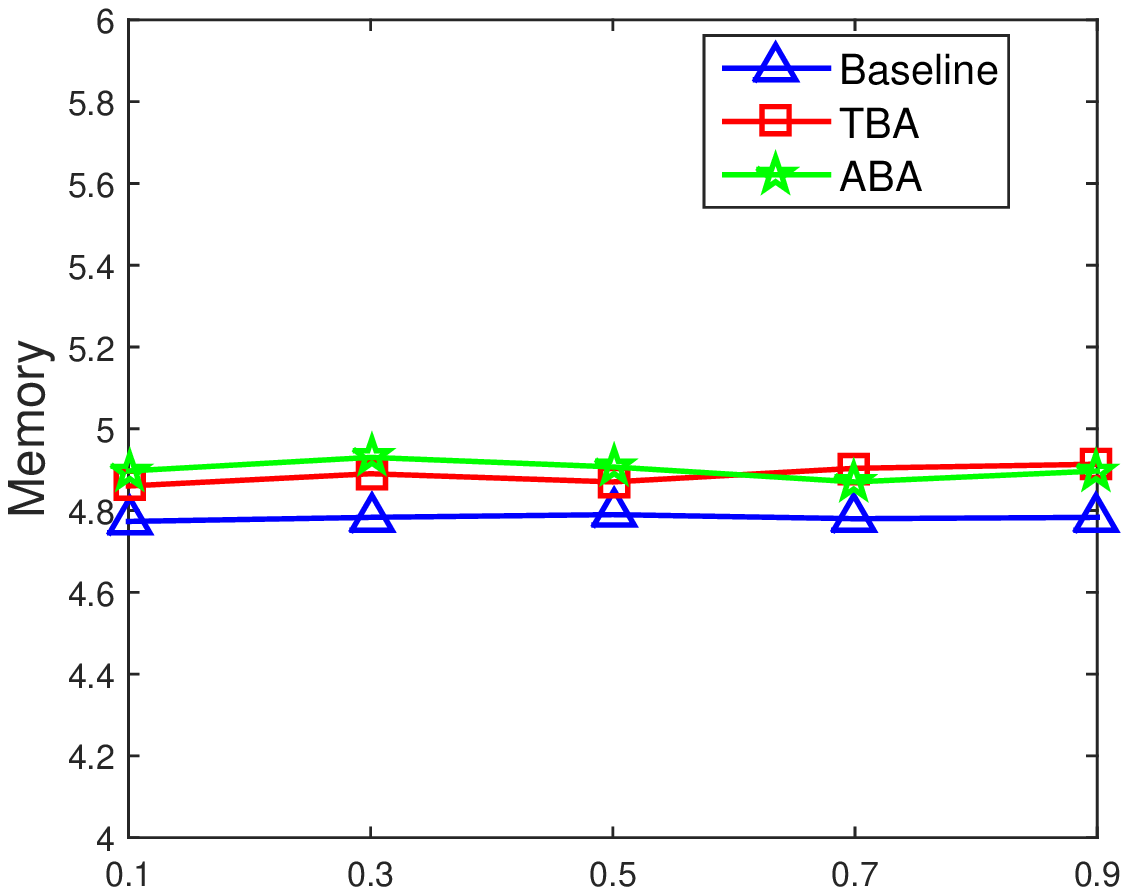}
		\label{fig:result3f}
	}
	
	\caption{Results of Real Dataset on varying $\gamma$.}
	\label{fig:result3}
\end{figure*}

\textbf{Effects of the global unit transportation fee $\gamma$.} The results of varying $\gamma$ are presented in Fig.\ref{fig:result1g} to \ref{fig:result1i}. We can see that the utility and running time decrease as the $\gamma$ increases for higher transportation fee and less workers that could be assigned to far tasks.

\textbf{Effects of the average budget of tasks $B_t$.} The results are presented in Fig.\ref{fig:result2a} to \ref{fig:result2c}. We can first see from the figure that the utility increases as the average budget increases. And there is no large differences of the running time and memory consumption between various $B_t$.

\textbf{Effects of the variance of the price of different skills $P_w$.} The results are presented in Fig.\ref{fig:result2d} to \ref{fig:result2f}. We can see from the figures that TBA algorithm and ABA algorithm have much better performance than baseline algorithm as the price increases. And the running time and memory consumption do not vary too much in different price.

\textbf{Effects of the total number of skills $|S|$.} The results are presented in Fig.\ref{fig:result2g} to \ref{fig:result2i}. First, we can observe that the utility and memory consumption do not change greatly as the number increases. Then, we can see that the running time increases as the number increases, and this is reasonable because it is much harder to find suitable workers to finish the task for more kinds of skills.

\textbf{Real Dataset.} The results on real dataset are shown in Fig.\ref{fig:result3a} to \ref{fig:result3f}, where we vary $|T|$ and price. We can observe similar patterns as those in Fig.\ref{fig:result1a} to \ref{fig:result1c} and Fig.\ref{fig:result1g} to \ref{fig:result1i}.

\textbf{Conclusion.} For Utility, TBA is better than ABA and baseline algorithm, and both TBA and ABA algorithm have a much better performance than baseline algorithm. As for running time, baseline algorithm is fastest, but the speed of TBA and ABA algorithm is acceptable for most circumstances. Moreover, TBA algorithm is faster than ABA algorithm.

\section{Related Work}
In this section, we review related works from two categories, namely task assignment and team formation problem.

\subsection{Task Assignment in Spatial Crowdsourcing}
The research on task assignment in spatial crowdsourcing mainly includes two parts: micro-task assignment and specialty-aware task assignment. 

Micro task refers to the spatial tasks that can be completed by any single worker. 
\cite{GIS12} is the first work on task assignment in spatial crowdsourcing, whose optimization objective is to maximize the total number of the assignment tasks. 
\cite{ICDE16} is the first work focusing on the online scenario of task assignment, and studies the two-sided online task assignment problem, whose goal is to maximize the total utility score of the assignment. 
\cite{ICDE17} also focuses on the online scenario and considers the influence of work space on task assignment, whose goal is to maximize the total utility score. 
\cite{VLDB16} studies the problem of online minimum weighted bipartite matching, which can be used in online task assignment. 
\cite{VLDB17} considers the problem of flexible online matching where workers can be scheduled if no task is assigned. 
\cite{DASFAA18Tao} recommends routes dynamically for workers to deal with online tasks, and the goal is to maximize the total utility.
\cite{ICDE18Zeng} assigns tasks to workers while trading off quality and latency of task completion.
\cite{SIGMOD18Tong} proposes a match-based approach to solve the dynamic pricing problem in spatial crowdsourcing.
\cite{CIKM17Zhao} takes the destinations of workers into consideration to perform task assignment.
\cite{ICDE18To} considers performing online task assignment while preserving the privacy of tasks and workers under the circumstance that the server is untrusted.
\cite{to2016real,tran2018real} proposes a real-time framework for task assignment.
The difference between our work and the aforementioned works is that they focus on micro tasks which can be completed by a single worker, and we study on the assignment for specialty-aware tasks which have requirements on skills of workers and usually have to be completed by multiple workers collaboratively.

\cite{WAIM16,DSE17Gao} recommend top-k teams with the minimum cost to a specialty-aware task. \cite{TKDE16} studies assigning workers for specialty-aware tasks to maximize the total utility score. The difference between our work and \cite{TKDE16} is that in our work workers specify fees for each of their skills, and in \cite{TKDE16} workers only have a united fee, which is not practical.

\subsection{Team Formation Problem}
A closely related topic is the team formation problem \cite{KDD09TeamFormation}, whose goal is to find a team of experts with the minimum cost, according to the skills and social relationships of the users. \cite{WWW12OnlineTeamFormation} studies the online version of the team formation problem, where the issue of workload balance is also considered. \cite{KDD12CapacitatedTeamFormation} studies another variant of the team formation problem where the capacity constraint of experts is considered. The difference between our problem and the team formation problem and its variants is that we do not consider the social relationships between users and focus on task assignment.
\section{Conclusion}
In this paper we study the problem of \underline{S}pecialty-\underline{A}ware \underline{T}ask \underline{A}ssignment (SATA) in spatial crowdsourcing, where the tasks have requirements on skills, and the workers specify fees for each of their skills. The goal is to maximize the total utility of the task assignment between tasks and workers. We prove the SATA problem is NP-hard. To solve the problem, we propose two efficient and effective heuristic algorithms. We conduct extensive experiments on both synthetic and real-world datasets to evaluate our algorithms. The experiment results show that our solutions are efficient and effective.

\vspace{-3ex}
\bibliographystyle{splncs03}
\bibliography{bibsample}

\begin{thebibliography}{10}
\providecommand{\url}[1]{\texttt{#1}}
\providecommand{\urlprefix}{URL }

\bibitem{WWW12OnlineTeamFormation}
Anagnostopoulos, A., Becchetti, L., Castillo, C., Gionis, A., Leonardi, S.:
  Online team formation in social networks. In: WWW 2012. pp. 839--848

\bibitem{gMission}
Chen, Z., Fu, R., Zhao, Z., Liu, Z., Xia, L., Chen, L., Cheng, P., Cao, C.C.,
  Tong, Y., Zhang, C.J.: gmission: A general spatial crowdsourcing platform.
  {PVLDB}  7(14),  1629--1632 (2014)

\bibitem{TKDE16}
Cheng, P., Lian, X., Chen, L., Han, J., Zhao, J.: Task assignment on
  multi-skill oriented spatial crowdsourcing. TKDE  28(8),  2201--2215 (2016)

\bibitem{WAIM16}
Gao, D., Tong, Y., She, J., Song, T., Chen, L., Xu, K.: Top-k team
  recommendation in spatial crowdsourcing. In: WAIM 2016. pp. 191--204

\bibitem{DSE17Gao}
Gao, D., Tong, Y., She, J., Song, T., Chen, L., Xu, K.: Top-k team
  recommendation and its variants in spatial crowdsourcing. DSE  2(2),
  136--150 (2017)

\bibitem{GIS12}
Kazemi, L., Shahabi, C.: Geocrowd: enabling query answering with spatial
  crowdsourcing. In: GIS 2012. pp. 189--198

\bibitem{KDD09TeamFormation}
Lappas, T., Liu, K., Terzi, E.: Finding a team of experts in social networks.
  In: SIGKDD 2009. pp. 467--476

\bibitem{KDD12CapacitatedTeamFormation}
Majumder, A., Datta, S., Naidu, K.: Capacitated team formation problem on
  social networks. In: SIGKDD 2012. pp. 1005--1013

\bibitem{CHI13}
Musthag, M., Ganesan, D.: Labor dynamics in a mobile micro-task market. In: CHI
  2013

\bibitem{ICDE17}
Song, T., Tong, Y., Wang, L., She, J., Yao, B., Chen, L., Xu, K.: Trichromatic
  online matching in real-time spatial crowdsourcing. In: ICDE 2017. pp.
  1009--1020

\bibitem{DASFAA18Tao}
Tao, Q., Zeng, Y., Zhou, Z., Tong, Y., Chen, L., Xu, K.: Multi-worker-aware
  task planning in real-time spatial crowdsourcing. In: DASFAA 2018

\bibitem{to2016real}
To, H., Fan, L., Tran, L., Shahabi, C.: Real-time task assignment in hyperlocal
  spatial crowdsourcing under budget constraints. In: PerCom 2016. pp. 1--8

\bibitem{ICDE18To}
To, H., Shahabi, C., Xiong, L.: Privacy-preserving online task assignment in
  spatial crowdsourcing with untrusted server. In: ICDE 2018

\bibitem{DBLP:journals/pvldb/TongCS17}
Tong, Y., Chen, L., Shahabi, C.: Spatial crowdsourcing: Challenges, techniques,
  and applications. {PVLDB}  10(12),  1988--1991 (2017)

\bibitem{VLDB16}
Tong, Y., She, J., Ding, B., Chen, L., Wo, T., Xu, K.: Online minimum matching
  in real-time spatial data: Experiments and analysis. vol.~9, pp. 1053--1064
  (2016)

\bibitem{ICDE16}
Tong, Y., She, J., Ding, B., Wang, L., Chen, L.: Online mobile micro-task
  allocation in spatial crowdsourcing. In: ICDE 2016. pp. 49--60

\bibitem{SIGMOD18Tong}
Tong, Y., Wang, L., Zhou, Z., Chen, L., Du, B., Ye, J.: Dynamic pricing in
  spatial crowdsourcing: A matching-based approach. In: SIGMOD 2018

\bibitem{VLDB17}
Tong, Y., Wang, L., Zhou, Z., Ding, B., Chen, L., Ye, J., Xu, K.: Flexible
  online task assignment in real-time spatial data. {PVLDB}  10(11),
  1334--1345 (2017)

\bibitem{tran2018real}
Tran, L., To, H., Fan, L., Shahabi, C.: A real-time framework for task
  assignment in hyperlocal spatial crowdsourcing. TIST  9(3), ~37 (2018)

\bibitem{ApproALG}
Vazirani, V.V.: Approximation Algorithms. Springer Science \& Business Media
  (2013)

\bibitem{ICDE18Zeng}
Zeng, Y., Tong, Y., Chen, L., Zhou, Z.: Latency-oriented task completion via
  spatial crowdsourcing. In: ICDE 2018

\bibitem{CIKM17Zhao}
and Yang Li and Yu Wang and Han Su~and Kai Zheng, Y.Z.: Destination-aware
  task assignment in spatial crowdsourcing. In: CIKM 2017. pp. 297--306

\end{thebibliography}

\end{document}